\newcommand{\be}{\begin{equation}}
\newcommand{\ee}{\end{equation}}
\newcommand{\bes}{\begin{equation}\begin{aligned}}
\newcommand{\ees}{\end{aligned}\end{equation}}
\newcommand{\ben}{\begin{equation}\nonumber\begin{aligned}}
 \newcommand{\R}{\mathbb{R}}
\renewcommand{\leq}{\leqslant}
\renewcommand{\geq}{\geqslant}
\newtheorem{thm}{Theorem}[section]
\newtheorem{lem}[thm]{Lemma}
\newtheorem{defi}[thm]{Definition}
\newtheorem*{main thm}{Main Theorem}
\numberwithin{equation}{section}
\begin{document}

\title{A sharp recovery condition for sparse signals with partial support information via orthogonal matching pursuit}

\author{ Huanmin~Ge  and Wengu~Chen
\thanks{H. Ge is with Graduate School, China Academy of Engineering Physics,
Beijing, 100088, China, e-mail:gehuanmin@163.com.}
\thanks{W. Chen is with Institute of Applied Physics and Computational Mathematics,
Beijing, 100088, China, e-mail: chenwg@iapcm.ac.cn.}
\thanks{This work was supported by the NSF of China (Nos.11271050, 11371183)
.} }

\maketitle

\begin{abstract}
This paper considers the exact recovery of  $k$-sparse signals in
the noiseless setting and support recovery  in the noisy case when
some prior information on the support of the signals is available.
This prior support consists of two parts. One part is a subset of
the true support and another part is outside of the true support.
For $k$-sparse signals $\bm{x}$ with the prior support which is
composed of $g$ true indices and $b$ wrong indices, we show that if
the restricted isometry constant (RIC) $\delta_{k+b+1}$ of the
sensing matrix $\bm{A}$ satisfies
\begin{eqnarray*}
\delta_{k+b+1}<\frac{1}{\sqrt{k-g+1}},
\end{eqnarray*}
then orthogonal matching pursuit (OMP)  algorithm can perfectly
recover the signals $\bm{x}$ from $\bm{y}=\bm{Ax}$ in $k-g$
iterations. Moreover, we show the above sufficient condition on the
RIC is sharp. In the noisy case, we achieve the exact recovery of
the remainder support (the part of the true support outside of the
prior support) for the $k$-sparse signals $\bm{x}$ from
$\bm{y}=\bm{Ax}+\bm{v}$ under appropriate conditions. For the
remainder support recovery, we also obtain a necessary condition
based on the minimum magnitude of partial nonzero elements of the
signals $\bm{x}$.
\end{abstract}

{Keywords: Orthogonal matching pursuit, Partial support information,
Restricted isometry constant, Sensing matrix.}

\section{Introduction}
Compressive sensing has been a very active area of recent research
in signal processing, applied mathematics and statistics
\cite{BS,DLTB,LDSP}, \cite{HS} and \cite{TLDRB}.
 A central  aim of compressive sensing is  to reconstruct sparse signals from inaccurate and incomplete measurements. In compressive sensing,
one considers the following  model:
\begin{eqnarray}\label{model1}
\bm{y}=\bm{Ax}+\bm{v},
\end{eqnarray}
where $\bm{y}\in \R^m$ is a measurement vector, the matrix $\bm{A}\in\R^{m\times n} \ (m\ll n)$ is a known sensing matrix,
the vector $\bm{x}\in\R^n$ is an  $k$-sparse signal  and $\bm{v}\in\R^m$ is a vector of measurement errors.
In particular, $\bm{v}=\bm{0}$ in the  noiseless setting.
Denote  the support of the vector $\bm{x}$ by $T=\mathrm{supp}(\bm{x})=\{i|\bm{x}_i\neq 0\}$ and the
size of its support  with $|T|=|\mathrm{supp}(\bm{x})|$.  If
$|\mathrm{supp}(\bm{x})|\leq k$, $\bm{x}$ is called $k$-sparse.
The goal is to recover the  unknown $k$-sparse signal $\bm{x}$ from  $\bm{y}$ and $\bm{A}$ in the model \eqref{model1} using fast and  efficient algorithms.

In order to analyze  the $\ell_1$-minimization, Cand\`{e}s and Tao
\cite{CT} introduced  a commonly used framework: the restricted
isometry property (RIP).
\begin{defi}\label{definition1}
A matrix $\bm{A}$ satisfies the RIP of order $k$ if there exists a constant $\delta_k\in[0,1)$ such that
\begin{eqnarray}\label{p2}
(1-\delta_k)\|\bm{x}\|_2^2\leq\|\bm{Ax}\|_2^2\leq(1+\delta_k)\|\bm{x}\|_2^2
\end{eqnarray}
holds for all $k$-sparse signals $\bm{x}$. And the smallest constant $\delta_k$ is called the restricted isometry constant (RIC).
\end{defi}
In this paper, we focus on a kind of  sparse signals which have some
prior support information (possibly erroneous). The recovery of such
sparse signals with a strong dependence on their prior supports has
been introduced in several contributions and possess practical and
analytical interests in many setups \cite{BMP,BSV, FMSY, J, KXAH,
NSW} and \cite{VL}. For example, this type of signals occurs in
video compression or dynamic magnetic  resonance imaging where the
supports of the sought vectors commonly evolve slowly with time.

Compressed sensing  has previously been studied under different conditions
for recovering sparse signal in the presence of prior support information.
To make good use of  prior support information of the signals, the following weighted $\ell_1$ minimization has  been introduced
\begin{eqnarray}\label{pro2}
\min_{\bm{x}\in \mathbb{R}^{n}}\|\bm{x}\|_{1,\bm{\mathrm{w}}} \ \ \
 {\rm subject\quad to}\ \ \  \|\bm{y}-\bm{Ax}\|_2\leq\epsilon,
\end{eqnarray}
where $\mathrm{\bm{\mathrm{w}}}\in[0,1]^n$ and
$\|\bm{x}\|_{1,\bm{\mathrm{w}}}=\sum\limits_{i=1}^n\bm{\mathrm{w}}_{i}|\bm{x}_{i}|$.
The main idea of the weighted $\ell_1$ minimization \eqref{pro2} is
to choose  appropriately the weight vector $\bm{\mathrm{w}}$ such
that in this weighted objective function,  the entries of $\bm{x}$
which are expected to be large are penalized  less. In particular,
the weighted $\ell_1$ minimization \eqref{pro2} reduces to the
standard $\ell_1$ minimization by taking $\bm{\mathrm{w}}=\bm{1}$.
The weighted $\ell_1$ minimization method \eqref{pro2}  has now been
well studied and achieved a complete theoretical system under
various models on the weight vector $\bm{\mathrm{w}}$.
 For example,
in the literature \cite{FMSY,LV, KXAH},  the authors have previously
studied
 the recovery of  signals with prior support information and
 obtained
different conditions to guarantee recovery of these signals via the
weighted $\ell_1$ minimization which only applies a single weight.
Chen and Li \cite{CL}  show that a sharp sufficient recovery
condition based on a high order RIP guarantees stable and robust
recovery of signals via the weighted $\ell_{1}$ minimization
\eqref{pro2} in bounded $\ell_{2}$ and Dantzig selector noise
settings. And the authors  not only point out  that the sufficient
recovery condition is weaker than that of the standard $\ell_{1}$
minimization method  but also point out that the weighted $\ell_{1}$
minimization method gives better upper bounds on the reconstruction
error,  as the accuracy of prior support estimate is at least
$50\%$. Lastly, Needell el.at \cite{NSW} and Chen el.at \cite{CG}
consider the sparse signal recovery with  disjoint prior supports
via the weighted $\ell_1$ minimization method \eqref{pro2} using
arbitrarily many distinct weights
 and obtain the recovery condition and associated recovery guarantees.

 It is well known that
the standard  OMP algorithm as a greedy algorithm is one of the most
effective algorithms in sparse signal recovery because of its
implementation simplicity and competitive recovery performance.
Modifications of the standard OMP  algorithm  have also been studied
for recovering the sparse signals under a partially known support.
As we know, recovering sparse signals with some prior support
information by using OMP algorithm and its modifications is much
fewer than by using weighted $\ell_1$ minimization. Tropp and
Gilbert \cite{TG} first demonstrate theoretically and empirically
sparse signal recovery from
 prior information via a modified OMP algorithm. In \cite{HSIG}, for the noiseless setting the authors  derive a simple recovery guarantee   based on the
mutual coherence of the matrix $\bm{A}$ and the number of true and
wrong indices in prior support $T_0$ for the sparse signal recovery
via the $\mathrm{OMP}_{T_0}$ algorithm  in Table 1. Karahanoglu and
Erdogan \cite{KE1} show that
\begin{align*}
\delta_{k+b+1}<\frac{1}{\sqrt{k-g}+1}
\end{align*}
is sufficient to ensure the sparse signal recovery from
$\bm{y}=\bm{Ax}$ via the $\mathrm{OMP}_{T_0}$, where
$T=\mathrm{supp}(\bm{x})$ with $|T|=k$, $|T\cap T_0|=g$ and
$|T^c\cap T_0|=b$.  However, the above condition on RIP is not
optimal. On the other hand, there is no result considering support
$T\setminus T_0$ recovery via the $\mathrm{OMP}_{T_0}$ algorithm in
the noisy case.

In this paper, we consider optimal sufficient conditions and  some
necessary condition of  the recovery of any
 $k$-sparse signal by the $\mathrm{OMP}_{T_0}$ algorithm in the noiseless and noisy cases.
We consider any $k$-sparse signal $\bm{x}$ with the  prior support
$T_0$, where  the support $T=\mathrm{supp}(\bm{x})$, $|T\cap T_0|=g$
is the number of true indices  and  $|T^c\cap T_0|=b$ is the number
of wrong indices. For the noiseless case, it is shown that
\begin{eqnarray*}
\delta_{k+b+1}<\frac{1}{\sqrt{k-g+1}}
\end{eqnarray*}
ensures  the $\mathrm{OMP}_{T_0}$ algorithm exactly recover the $k$-sparse signal $\bm{x}$ in $k-g$ iterations.
 Moreover, we point out that our condition is sharp in the following sense:
 there exist a sensing  matrix $\bm{A}$ with $\delta_{k+b+1}=\frac{1}{\sqrt{k-g+1}}$, a $k$-sparse signal $\bar{\bm{x}}$  and  the prior support
$T_0$ satisfying $|\mathrm{supp}(\bar{\bm{x}})\cap T_0|=g<k$ and $|(\mathrm{supp}(\bar{\bm{x}}))^c\cap T_0|=b$
such that the $\mathrm{OMP}_{T_0}$ algorithm   may fail to recover the $k$-sparse signal $\bar{\bm{x}}$ in $k-g$ iterations.
For the  noisy case,
we show that if the sensing matrix $\bm{A}$
satisfies $\delta_{k+b+1}<\frac{1}{\sqrt{k-g+1}}$ and $\|v\|_2\leq \varepsilon$, then the $\mathrm{OMP}_{T_0}$ algorithm
exactly recovers the remainder support  $T\setminus T_0$ and obtains the estimated signal $\hat{\bm{x}}$ of the $k$-sparse signal $\bm{x}$  in $k-g$ iterations provided that
\begin{align*}
\min_{i\in T\setminus T_0
}|\bm{x}_i|>\max\Big\{\frac{\sqrt{2(1+\delta_{k+b+1})}\varepsilon}{1-\sqrt{k-g+1}\delta_{k+b+1}},
\ \ \frac{2\varepsilon}{\sqrt{1-\delta_{k+b+1}}}\Big\}.
\end{align*}
Further,
we obtain the upper bounds of $\|\bm{x}-\hat{\bm{x}}\|_2$ and $\max_{i\in T_0\setminus T}|\hat{\bm{x}}_i|$, and
the lower bound of $\min_{i\in T\cap T_0}|\widehat{\bm{x}}_i|$.
At last, we obtain a necessary condition for   exactly recovering the remainder support  $T\setminus T_0$ of the $k$-sparse signal $\bm{x}$
based on the minimum magnitude of elements of
 $\bm{x}_{T\setminus T_0}$. That is,
if the sensing matrix $\bm{A}$ satisfies the RIP of order $k+b+1$ with $0\leq \delta_{k+b+1}<1$ and
the  $\mathrm{OMP}_{T_0}$ algorithm exactly recovers the remainder support  $T\setminus T_0$, then
\begin{align*}
\min_{i\in T\setminus T_0 }|\bm{x}_i| >\frac{\sqrt{1-\delta_{k+b+1}}\varepsilon}{1-\sqrt{k-g+1}\delta_{k+b+1}}.
\end{align*}

The rest of the paper is organized as follows. In Section \ref{2},
we give some notations that will be used throughout this paper, some
significant lemmas and the proofs of them. The main results on the
exact recovery of $k$-sparse signals in the noiseless case and their
proofs are given in Section \ref{3}. Section \ref{4} considers the
exact recovery of the remainder support $T\setminus T_0$ in the
noisy setting. In Section \ref{5}, we discuss the validity of our
sufficient conditions comparing with previous results.

\section{Notations and preliminaries}\label{preliminaries}\label{2}
Let us now define  basic notations. Boldface lowercase letters and boldface uppercase letters respectively denote
column vectors and matrices in the real field $\R$.
 $\langle \cdot,\cdot \rangle$ refers to the inner product
between vectors and $\|\cdot\|_p$ with $p=1,\ 2$ stands for $\ell_p$ norm.
$[n]$ denotes the index set $\{1,2,\ldots,n\}$.
Let $\Gamma\subseteq[n]$ be an index set and $\Gamma^c\subseteq[n]$ be the complementary set of $\Gamma$.  $\bm{x}_\Gamma\in\R^{|\Gamma|}$
denotes the vector composed of components of $\bm{x}\in \R^n$ indexed by $i\in\Gamma$. Define $\tilde{\bm{x}}_\Gamma\in\R^{n}$ by
 \begin{eqnarray*}
(\tilde{\bm{x}}_\Gamma)_i= \left\{
   \begin{array}{ll}
     \bm{x}_i, & \hbox{$i\in \Gamma$;} \\
     0, & \hbox{$i \in \Gamma^c$,}
   \end{array}
 \right.
 \end{eqnarray*}
where $i\in[n].$ Let  the matrix transpose of  the matrix $\bm{A}$
be $\bm{A}^{'}$. $\bm{A}_i$ with $i\in[n]$ denotes the $i$-th column
of $\bm{A}$. Denote by $\bm{A}_\Gamma$ a submatrix of $\bm{A}$
corresponding to $\Gamma$ which consists of  all columns of $\bm{A}$
with index $i\in \Gamma$ . Let $\bm{e}_i\in\R^n$ be the $i$-th
coordinate unit vector.

Let $\bm{A}_{\Gamma}^\dagger$ denote the pseudo-inverse of $\bm{A}_{\Gamma}$. When $\bm{A}_{\Gamma}$ is full column rank ($|\Gamma|\leq m$), then $\bm{A}_{\Gamma}^\dagger=(\bm{A}_{\Gamma}^{'}\bm{A}_{\Gamma})^{-1}\bm{A}_{\Gamma}^{'}$.
 Moreover, $\bm{P}_{\Gamma}=\bm{A}_{\Gamma}\bm{A}_{\Gamma}^\dagger$ and $\bm{P}^\bot_{\Gamma}=\bm{I}-\bm{P}_{\Gamma}$ represent two orthogonal projection operators,
where $\bm{P}_{\Gamma}$  projects a given vector orthogonally onto the spanned space by all columns of $\bm{A}_{\Gamma}$, $\bm{P}^\bot_{\Gamma}$ projects
onto its orthogonal complement and $\bm{I}$ is identity mapping.

The frame of the $\mathrm{OMP}_{T_0}$ algorithm  is formally listed in Table $1$.

\begin{center} Table  1: \ \ \ \ The $\mathrm{OMP}_{T_0} $ algorithm
\end{center}
\hrule
\textbf{Input}\ \ \ \ \ \ \ \ \ measurements $\bm{y}\in\R^m$, sensing matrix $\bm{A}\in\R^{m\times n}$, sparse level $k$,  the number

\ \ \ \ \ \ \ \ \ \ \ \ \ \   of correct indices $g$, prior support $T_0$.

\textbf{Initialize}\ \ \ \ iteration count $t=0$,  estimated  support set $\Lambda_0=T_0$, residual vector

\ \ \ \ \ \ \ \ \ \ \ \ \ \ \ \ \ \   $\bm{r}^{(0)}=\bm{y}-\bm{P}_{\Lambda_0}\bm{y}$.

\hrule
\textbf{While}\ \ \ \ \  stopping criterion is not met \ \ \ \ t=t+1

\ \ \ \ \ \ \ \ \ \ \ \ \ \ \ (Identification\ step)\ \ \ \ $j_t=\arg\max_{i}|\langle \bm{r}^{(t-1)},\bm{A}\bm{e}_i\rangle|$.

\ \ \ \ \ \ \ \ \ \ \ \ \ \ \ (Augmentation step)\ \ \ $\Lambda_t=\Lambda_{t-1}\cup \{j_t\}$.

\ \ \ \ \ \ \ \ \ \ \ \ \ \  \ (Estimation step)\ \ \ \ \ \ \ \ $\bm{x}^{(t)}=\min\limits_{\bm{u}}\|\bm{y}-\bm{A}_{\Lambda_t} \bm{u}\|_2$.

\ \ \ \ \ \ \ \ \ \ \ \ \ \  \ (Residual update step)\ \ \ $\bm{r}^{(t)}=\bm{y}-\bm{A}_{\Lambda_t}\bm{x}^{(t)}$.

\textbf{End}

\textbf{Output}\ \ \ \  the estimated signal
$\hat{\bm{x}}_{\Lambda_t}=\bm{x}^{(t)},\ \ \hat{\bm{x}}_{\Lambda_t^c}=\bm{0}$.
\ \ \\
It is clear that the  $\mathrm{OMP}_{T_0}$ algorithm reduces to the standard OMP algorithm as $T_0=\emptyset$.
In \cite{HSIG}, Herzet et al. give a rigorous definition of ``success'' for the $\mathrm{OMP}_{T_0}$ algorithm, which
matches the classical ``$k$-step'' analysis of the standard OMP algorithm.
\begin{defi}\cite{HSIG}\label{definition2}
The $\mathrm{OMP}_{T_0}$ algorithm with $\bm{y}$ defined in \eqref{model1} as input succeeds if and only if it
selects indices in $T\setminus T_0$ during the first $k-g$ iterations.
\end{defi}

 The authors \cite{HSIG} also proposed the  $\mathrm{OMP}_{T_0}$ algorithm can be understood as
a particular instance of the standard OMP algorithm, in which
indices in the prior support $T_0$ have been identified during the
first $g+b$ iterations. And any condition which guarantees the
success of  the  $\mathrm{OMP}_{T_0}$ algorithm in the sense of
Definition \ref{definition2} ensures the success of the standard OMP
algorithm in $k+b$ iterations provided that the indices in the prior
support
 $T_0$ are selected during the first $g+b$ iterations.

For each iteration of the $\mathrm{OMP}_{T_0}$ algorithm, the solution of  the minimization problem $\min\limits_{\bm{u}}\|\bm{y}-\bm{A}_{\Lambda_t} \bm{u}\|_2$ is
$$\bm{x}^{(t)}=\arg \min\limits_{\bm{u}}\|\bm{y}-\bm{A}_{\Lambda_t} \bm{u}\|_2=\bm{A}_{\Lambda_t}^{\dag}\bm{y}$$
by the least-square method.
Further, by the definition $\bm{A}_{\Lambda_t}^\dagger=(\bm{A}_{\Lambda_t}^{'}\bm{A}_{\Lambda_t})^{-1}\bm{A}_{\Lambda_t}^{'} $ and some simple calculations,
 one has
\begin{align}\label{equation3}
\bm{r}^{(t)}&=\bm{y}-\bm{A}_{\Lambda_t}\bm{x}^{(t)} \nonumber \\
&=\bm{y}-\bm{A}_{\Lambda_t}\bm{A}_{\Lambda_t}^{\dag}\bm{y}\nonumber \\
&=\bm{A}_T\bm{x}_T-\bm{A}_{\Lambda_t}\bm{A}_{\Lambda_t}^{\dag}\bm{A}_T\bm{x}_T+\bm{v}-\bm{A}_{\Lambda_t}\bm{A}_{\Lambda_t}^{\dag}\bm{v}\nonumber \\
&=\bm{A}_{T\setminus\Lambda_t}\bm{x}_{T\setminus \Lambda_t}+\bm{A}_{\Lambda_t}\bm{x}_{\Lambda_t}-\bm{A}_{\Lambda_t}\bm{A}_{\Lambda_t}^{\dag}(\bm{A}_{T\setminus \Lambda_t}\bm{x}_{T\setminus \Lambda_t}+\bm{A}_{\Lambda_t}\bm{x}_{\Lambda_t})+(\bm{I}-\bm{P}_{\Lambda_t})\bm{v}\nonumber \\
&=\bm{A}_{T\setminus\Lambda_t}\bm{x}_{T\setminus\Lambda_t}-\bm{A}_{\Lambda_t}\bm{A}_{\Lambda_t}^{\dag}\bm{A}_{T\setminus\Lambda_t}\bm{x}_{T\setminus\Lambda_t}+(\bm{I}-\bm{P}_{\Lambda_t})\bm{v}\nonumber \\
&=\bm{A}_{T\cup \Lambda_t} \bm{z}_{T\cup \Lambda_t}+(\bm{I}-\bm{P}_{\Lambda_t})\bm{v}
\end{align}
where
\begin{align}\label{equation8}
\bm{z}_{T\cup \Lambda_t}=\left(
                      \begin{array}{c}
                        \bm{x}_{T\setminus\Lambda_t} \\
                        -\bm{A}_{\Lambda_t}^{\dag}\bm{A}_{T\setminus\Lambda_t}\bm{x}_{T\setminus\Lambda_t} \\
                      \end{array}
                    \right).
\end{align}
It is clear that if $T\setminus\Lambda_t \neq \emptyset$ then $\bm{z}_{T\cup \Lambda_t}\neq\bm{0}$.
And $\bm{r}^{(t)}=\bm{P}_{\Lambda_t}^\bot y$, which implies the residual $\bm{r}^{(t)}$ is orthogonal to the columns of $\bm{A}_{\Lambda_t}$.

To analyze the main results of this paper, we establish the
following important lemma.
\begin{lem}\label{lemma1}
Let the support $T=\mathrm{supp}(\bm{x})$ with $|T|=k$ and the prior
support $T_0$ satisfy $|T\cap T_0|=g<k$ and $|T^c\cap T_0|=b$.
Suppose  the sensing matrix $\bm{A}$ satisfies the RIP 　of order
$k+b+1$ and  $\Lambda_t\subseteq T\cup T_0$ for  $0\leq t<k-g$ in
the $\mathrm{OMP}_{T_0}$  algorithm, then
\begin{align}\label{equation2}
&\max_{i\in T\setminus \Lambda_t}|\langle \bm{Ae}_i,\bm{A}_{T\cup \Lambda_t}\bm{z}_{T\cup \Lambda_t}\rangle|
-\max_{i\in (T\cup T_0)^c}|\langle \bm{Ae}_i,\bm{A}_{T\cup \Lambda_t}\bm{z}_{T\cup \Lambda_t}\rangle|\nonumber\\
&\geq\frac{1}{\sqrt{k-g-t}}\bigg(1-\sqrt{k-g-t+1}\delta_{k+b+1}\bigg)\|\tilde{\bm{z}}_{T\cup \Lambda_t}\|_2.
\end{align}
\end{lem}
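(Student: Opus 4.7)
The plan is to estimate the two maxima in \eqref{equation2} separately. Throughout write $\bm{w}=\bm{A}_{T\cup\Lambda_t}\bm{z}_{T\cup\Lambda_t}=\bm{A}\tilde{\bm{z}}_{T\cup\Lambda_t}$ and keep three facts in mind: (i) since $\Lambda_t\subseteq T\cup T_0$, $\mathrm{supp}(\tilde{\bm{z}}_{T\cup\Lambda_t})\subseteq T\cup T_0$, a set of size at most $k+b$; (ii) $\bm{w}=\bm{P}^\bot_{\Lambda_t}\bm{A}_{T\setminus\Lambda_t}\bm{x}_{T\setminus\Lambda_t}$ is orthogonal to every column of $\bm{A}_{\Lambda_t}$, so $\bm{A}_{\Lambda_t}^{'}\bm{w}=\bm{0}$ and consequently $\|\bm{A}_{T\setminus\Lambda_t}^{'}\bm{w}\|_2=\|\bm{A}_{T\cup\Lambda_t}^{'}\bm{w}\|_2$; and (iii) the definition \eqref{equation8} yields $\|\tilde{\bm{z}}_{T\cup\Lambda_t}\|_2^2=\|\bm{x}_{T\setminus\Lambda_t}\|_2^2+\|\bm{z}_{\Lambda_t}\|_2^2$.

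For the second maximum, the near-orthogonality consequence of the RIP applied to $\{i\}\cup(T\cup\Lambda_t)$, a disjoint union of cardinality at most $k+b+1$, immediately gives the upper bound $|\langle\bm{A}\bm{e}_i,\bm{w}\rangle|\le\delta_{k+b+1}\|\tilde{\bm{z}}_{T\cup\Lambda_t}\|_2$ for every $i\in(T\cup T_0)^c$. For the first maximum, let $j^\ast\in(T\cup T_0)^c$ attain the second max, write $\beta=|\bm{A}_{j^\ast}^{'}\bm{w}|$, set $S=(T\cup\Lambda_t)\cup\{j^\ast\}$ (cardinality at most $k+b+1$), and apply the RIP estimate $\|\bm{A}_S^{'}\bm{A}_S\bm{u}-\bm{u}\|_2\le\delta_{k+b+1}\|\bm{u}\|_2$ to the vector $\bm{u}$ obtained by zero-padding $\tilde{\bm{z}}_{T\cup\Lambda_t}$ on $\{j^\ast\}$. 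Splitting the resulting componentwise identity across the three disjoint parts $T\setminus\Lambda_t$, $\Lambda_t$, $\{j^\ast\}$ and invoking fact (ii) on the middle part produces the key estimate
\begin{equation*}
\|\bm{A}_{T\setminus\Lambda_t}^{'}\bm{w}-\bm{x}_{T\setminus\Lambda_t}\|_2^2+\|\bm{z}_{\Lambda_t}\|_2^2+\beta^2\ \le\ \delta_{k+b+1}^2\,\|\tilde{\bm{z}}_{T\cup\Lambda_t}\|_2^2.
\end{equation*}
Combining the triangle inequality with the standard $\ell_\infty$-to-$\ell_2$ bound $\alpha:=\max_{i\in T\setminus\Lambda_t}|\langle\bm{A}\bm{e}_i,\bm{w}\rangle|\ge\|\bm{A}_{T\setminus\Lambda_t}^{'}\bm{w}\|_2/\sqrt{k-g-t}$ and fact (iii) then yields
\begin{equation*}
\sqrt{k-g-t}\,\alpha\ \ge\ \sqrt{\|\tilde{\bm{z}}_{T\cup\Lambda_t}\|_2^2-\|\bm{z}_{\Lambda_t}\|_2^2}\ -\ \sqrt{\delta_{k+b+1}^2\|\tilde{\bm{z}}_{T\cup\Lambda_t}\|_2^2-\|\bm{z}_{\Lambda_t}\|_2^2-\beta^2}.
\end{equation*}

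To finish, subtract $\sqrt{k-g-t}\,\beta$ from both sides of the last inequality and invoke the weighted Cauchy--Schwarz inequality
\begin{equation*}
\sqrt{\delta^2 n-w-\beta^2}\ +\ \sqrt{k-g-t}\,\beta\ \le\ \sqrt{(k-g-t+1)(\delta^2 n-w)}
\end{equation*}
(with $n=\|\tilde{\bm{z}}_{T\cup\Lambda_t}\|_2^2$, $w=\|\bm{z}_{\Lambda_t}\|_2^2$, $\delta=\delta_{k+b+1}$, obtained from the two-term Cauchy--Schwarz with weights $(1,\sqrt{k-g-t})$). This bundles the two error terms and reduces the claim to the one-variable inequality $\delta+\sqrt{\delta^2-u}\le\sqrt{k-g-t+1}\,(1+\sqrt{1-u})$ for $u=w/n\in[0,\delta^2]$, which follows by rationalising $\sqrt{n}-\sqrt{n-w}=w/(\sqrt{n}+\sqrt{n-w})$ and $\delta\sqrt{n}-\sqrt{\delta^2 n-w}=w/(\delta\sqrt{n}+\sqrt{\delta^2 n-w})$. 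The delicate point of the whole argument is precisely this final Cauchy--Schwarz step: bundling the two error terms with the specific weights $(1,\sqrt{k-g-t})$ converts the factor $1+\sqrt{k-g-t}$ that a crude triangle-inequality treatment of $\beta$ would produce into the sharper $\sqrt{k-g-t+1}$ appearing in the statement.
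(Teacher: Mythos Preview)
Your argument is correct and reaches the sharp constant, but the route is genuinely different from the paper's. The paper proceeds via a polarization identity: it introduces a parameter $s=-\frac{\sqrt{k-g-t+1}-1}{\sqrt{k-g-t}}$ and the auxiliary vectors $\tilde{\bm{z}}_{T\cup\Lambda_t}\pm\hat{s}_{i_t}\bm{e}_{i_t}$, rewrites $(1-s^4)\sqrt{k-g-t}\,\|\bm{z}\|_2(\alpha-\beta)$ as the difference $\|\bm{A}(\tilde{\bm{z}}+\hat{s}\bm{e}_{i_t})\|_2^2-\|\bm{A}(s^2\tilde{\bm{z}}-\hat{s}\bm{e}_{i_t})\|_2^2$, and then applies the RIP lower and upper bounds to the two terms; the specific choice of $s$ is exactly what collapses the resulting expression to $\frac{1}{\sqrt{k-g-t}}(1-\sqrt{k-g-t+1}\,\delta_{k+b+1})\|\tilde{\bm{z}}\|_2$. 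You instead use the operator–norm form of the RIP, $\|\bm{A}_S^{'}\bm{A}_S\bm{u}-\bm{u}\|_2\le\delta_{k+b+1}\|\bm{u}\|_2$ on the enlarged set $S=(T\cup\Lambda_t)\cup\{j^\ast\}$, split the resulting vector across $T\setminus\Lambda_t$, $\Lambda_t$, $\{j^\ast\}$ (using the orthogonality $\bm{A}_{\Lambda_t}^{'}\bm{w}=\bm{0}$), and then combine the two ``error'' pieces $\sqrt{\delta^2 n-w-\beta^2}$ and $\sqrt{k-g-t}\,\beta$ via two–term Cauchy--Schwarz; the weights $(1,\sqrt{k-g-t})$ are precisely what upgrades $1+\sqrt{k-g-t}$ to $\sqrt{k-g-t+1}$. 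The residual one–variable inequality $\delta+\sqrt{\delta^2-u}\le\sqrt{k-g-t+1}\,(1+\sqrt{1-u})$ is in fact immediate term by term (since $\delta<1\le\sqrt{k-g-t+1}$ and $\delta^2-u\le 1-u$); the rationalisation you mention is really the reduction step rather than its proof, but this is only an expository quibble. In short: the paper hides the optimisation inside a clever parameter $s$, while your approach is parameter-free and makes transparent exactly where the sharp factor $\sqrt{k-g-t+1}$ enters. Two minor notes: your opening near-orthogonality bound $\beta\le\delta_{k+b+1}\|\tilde{\bm{z}}\|_2$ is never actually used (it is subsumed by the joint estimate), and the count $|T\setminus\Lambda_t|=k-g-t$ relies on the (standard) fact that the indices $j_1,\dots,j_t$ are distinct and lie in $T\setminus T_0$, which follows from $\Lambda_t\subseteq T\cup T_0$ together with the orthogonality of the residual to already-selected columns.
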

\begin{proof}
 For simplicity,   let
\begin{eqnarray}\label{e4}
\alpha_1^{(t)}=\max_{i\in T\setminus \Lambda_t}|\langle \bm{Ae}_i,\bm{A}_{T\cup \Lambda_t}\bm{z}_{T\cup \Lambda_t}\rangle|
=\|\bm{A}_{T\setminus\Lambda_t}^{'}\bm{A}_{T\cup \Lambda_t}\bm{z}_{T\cup \Lambda_t}\|_\infty
\end{eqnarray}
and
\begin{align} \label{equation1}
\beta_1^{(t)}
=\max_{i\in (T\cup T_0)^c}|\langle \bm{Ae}_i,\bm{A}_{T\cup \Lambda_t}\bm{z}_{T\cup \Lambda_t}\rangle|
=|\langle \bm{Ae}_{i_t}, \bm{A}_{T\cup \Lambda_t}\bm{z}_{T\cup \Lambda_t}\rangle|
\end{align}
where $i_t=\arg\max\limits_{i\in (T\cup \Lambda_t)^c}|\langle \bm{Ae}_i,\bm{A}_{T\cup \Lambda_t}\bm{z}_{T\cup \Lambda_t}\rangle|$.
Based on the definition of $\alpha_1^{(t)}$ in \eqref{e4}, one obtains that
\begin{eqnarray}\label{e5}
\langle \bm{A}\tilde{\bm{z}}_{T\cup \Lambda_t}, \bm{A}\tilde{\bm{z}}_{T\cup \Lambda_t} \rangle
&=&\langle \bm{A}_{T\cup \Lambda_t}\bm{z}_{T\cup \Lambda_k}, \bm{A}_{T\cup \Lambda_t}\bm{z}_{T\cup \Lambda_t} \rangle\nonumber \\
&=&\langle \bm{z}_{T\cup \Lambda_t}, \bm{A}_{T\cup \Lambda_t}^{'}\bm{A}_{T\cup \Lambda_t}\bm{z}_{T\cup \Lambda_t} \rangle\nonumber\\
&\leq&\|\bm{z}_{T\cup \Lambda_t}\|_2\| \bm{A}_{T\cup \Lambda_t}^{'}\bm{A}_{T\cup \Lambda_t}\bm{z}_{T\cup \Lambda_t}\|_2\nonumber\\
&\stackrel{(1)}{=}&\|\bm{z}_{T\cup \Lambda_t}\|_2\| \bm{A}_{T \setminus \Lambda_t}^{'}\bm{A}_{T\cup \Lambda_t}\bm{z}_{T\cup \Lambda_t}\|_2\nonumber\\
&\leq&\sqrt{k-g-t}\|\bm{z}_{T\cup \Lambda_t}\|_2\| \bm{A}_{T \setminus \Lambda_t}^{'}\bm{A}_{T\cup \Lambda_t}\bm{z}_{T\cup \Lambda_t}\|_\infty\nonumber\\
&=&\sqrt{k-g-t}\|\bm{z}_{T\cup \Lambda_t}\|_2\alpha_1^{(t)}
\end{eqnarray}
where $(1)$ follows from
\begin{align*}
 \bm{A}_{\Lambda_t}^{'}\bm{A}_{T\cup \Lambda_t}\bm{z}_{T\cup \Lambda_t}
& =\bm{A}_{\Lambda_t}^{'}\left(\bm{A}_{T\setminus\Lambda_t}\bm{x}_{T\setminus\Lambda_t}-\bm{A}_{\Lambda_t}\bm{A}_{\Lambda_t}^{\dag}\bm{A}_{T\setminus\Lambda_t}\bm{x}_{T\setminus\Lambda_t}\right)\\
&=\bm{A}_{\Lambda_t}^{'}\bm{A}_{T\setminus\Lambda_t}\bm{x}_{T\setminus\Lambda_t}-\bm{A}_{\Lambda_t}^{'}\bm{A}_{\Lambda_t}(\bm{A}_{\Lambda_t}^{'}\bm{A}_{\Lambda_t})^{-1}
\bm{A}_{\Lambda_t}^{'}\bm{A}_{T\setminus\Lambda_t}\bm{x}_{T\setminus\Lambda_t}\\
&=\bm{0}.
\end{align*}

Let $s=-\frac{\sqrt{k-g-t+1}-1}{\sqrt{k-g-t}}$ and
\begin{eqnarray*}
 \hat{s}_{i_t}=\left\{
       \begin{array}{ll}
         +\|\bm{z}_{T\cup \Lambda_t}\|_2s, & \hbox{$\langle \bm{A}\tilde{\bm{z}}_{T\cup \Lambda_t}, \bm{Ae}_{i_t}\rangle\geq 0$,} \\
         -\|\bm{z}_{T\cup \Lambda_t}\|_2s, & \hbox{$\langle \bm{A}\tilde{\bm{z}}_{T\cup \Lambda_t},\bm{Ae}_{i_t}\rangle<0$,}
       \end{array}
     \right.
\end{eqnarray*}
then
\begin{eqnarray*}
s^2=\frac{\sqrt{k-g-t+1}-1}{\sqrt{k-g-t+1}+1}<1
\end{eqnarray*}
and
 \begin{eqnarray*}
\frac{2\hat{s}_{i_t}}{1-s^2}=\left\{
                               \begin{array}{ll}
                                 -\sqrt{k-g-t}\|\bm{z}_{T\cup \Lambda_t}\|_2, & \hbox{$\langle \bm{A}\tilde{\bm{z}}_{T\cup \Lambda_t}, \bm{Ae}_{i_t}\rangle\geq 0$;} \\
                                 \sqrt{k-g-t}\|\bm{z}_{T\cup \Lambda_t}\|_2, & \hbox{$\langle \bm{A}\tilde{\bm{z}}_{T\cup \Lambda_t}, \bm{Ae}_{i_t}\rangle<0$.}
                               \end{array}
                             \right.
 \end{eqnarray*}
Further, based on  \eqref{e5}, \eqref{equation1} and  some simple calculations we derive that
\begin{align}\label{e6}
&(1-s^4)\sqrt{k-g-t}\|\bm{z}_{T\cup \Lambda_t}\|_2(\alpha_1^{(t)}-\beta_1^{(t)})\nonumber\\
&\geq(1-s^4)\left(\langle \bm{A}\tilde{\bm{z}}_{T\cup \Lambda_t}, \bm{A}\tilde{\bm{z}}_{T\cup \Lambda_t} \rangle
-\sqrt{k-g-t}\|\bm{z}_{T\cup \Lambda_t}\|_2|\langle \bm{Ae}_{i_t},\bm{A}_{T\cup \Lambda_t}\bm{z}_{T\cup \Lambda_t}\rangle|\right)\nonumber\\
&\geq(1-s^4)\left(\langle \bm{A}\tilde{\bm{z}}_{T\cup \Lambda_t}, \bm{A}\tilde{\bm{z}}_{T\cup \Lambda_t} \rangle-\sqrt{k-g-t}\|\tilde{\bm{z}}_{T\cup \Lambda_t}\|_2
|\langle \bm{Ae}_{i_t},\bm{A}\tilde{\bm{z}}_{T\cup \Lambda_t}\rangle|\right)\nonumber\\
&=\|\bm{A}(\tilde{\bm{z}}_{T\cup \Lambda_t}+\hat{s}_{i_t}\bm{e}_{i_t})\|_2^2-\|\bm{A}(s^2\tilde{\bm{z}}_{T\cup \Lambda_t}-\hat{s}_{i_t}\bm{e}_{i_t})\|_2^2.
\end{align}
Because $0\leq t< k-g$, the sensing matrix $\bm{A}$ satisfies  the RIP of order $k+b+1$
with $\delta_{k+b+1}$, $|T\cup \Lambda_t|=k+b$ and $i_t\in (T\cup \Lambda_t)^c$, we obtain that
\begin{eqnarray}\label{e7}
&&\|\bm{A}(\tilde{\bm{z}}_{T\cup \Lambda_t}+\hat{s}_{i_t}\bm{e}_{i_t})\|_2^2-\|\bm{A}(s^2\tilde{\bm{z}}_{T\cup \Lambda_t}-\hat{s}_{i_t}\bm{e}_{i_t})\|_2^2\nonumber\\
&&\geq(1-\delta_{k+b+1})\left(\|\tilde{\bm{z}}_{T\cup \Lambda_t}+\hat{s}_{i_t}\bm{e}_{i_t}\|_2^2\right)
-(1+\delta_{k+b+1})\left(\|s^2\tilde{\bm{z}}_{T\cup \Lambda_t}-\hat{s}_{i_t}\bm{e}_{i_t}\|_2^2\right)\nonumber\\
&&=(1-\delta_{k+b+1})(1+s^2)\|\tilde{\bm{z}}_{T\cup \Lambda_t}\|_2^2-(1+\delta_{k+b+1})(s^4+s^2)\|\tilde{\bm{z}}_{T\cup \Lambda_t}\|_2^2\nonumber\\
&&=\|\tilde{\bm{z}}_{T\cup \Lambda_t}\|_2^2(1+s^2)\left(1-\delta_{k+b+1}-(1+\delta_{k+b+1})s^2\right)\nonumber\\
&&=\|\tilde{\bm{z}}_{T\cup \Lambda_t}\|_2^2(1+s^2)^2\bigg(\frac{1-s^2}{1+s^2}-\delta_{k+b+1}\bigg).
\end{eqnarray}
From the definition of $s$, it follows that
\begin{eqnarray*}\label{e8}
\frac{1-s^2}{1+s^2}=\frac{1-\frac{\sqrt{k-g-t+1}-1}{\sqrt{k-g-t+1}+1}}{1+\frac{\sqrt{k-g-t+1}-1}{\sqrt{k-g-t+1}+1}}=\frac{1}{\sqrt{k-g-t+1}}.
\end{eqnarray*}
Therefore, by \eqref{e6}, \eqref{e7} and the above equality
 we have that
\begin{align*}
\alpha_1^{(t)}-\beta_1^{(t)}&\geq\frac{(1+s^2)^2\bigg(\frac{1-s^2}{1+s^2}-\delta_{k+b+1}\bigg)}{(1-s^4)\sqrt{k-g-t}}\|\tilde{\bm{z}}_{T\cup \Lambda_t}\|_2\\
&=\frac{1}{\sqrt{k-g-t}}\bigg(1-\sqrt{k-g-t+1}\delta_{k+b+1}\bigg)\|\tilde{\bm{z}}_{T\cup \Lambda_t}\|_2.
\end{align*}
\end{proof}

\section{An optimal exact recovery condition in noiseless case}\label{3}
\ \ In this section, we  establish the exact recovery results in
Theorem \ref{the1} and Theorem \ref{the2}. If $j_t\in T\setminus
\Lambda _{t-1}$ ($1\leq t\leq k-g$) in the $t$-th iteration, the
$\mathrm{OMP}_{T_0}$ algorithm  makes a success, i.e.,
$\max\limits_{i\in T\setminus \Lambda_{t-1}}|\langle
\bm{Ae}_i,\bm{r}^{(t-1)}\rangle|
>\max\limits_{i\in (T\cup T_0)^c}|\langle \bm{Ae}_i,\bm{r}^{(t-1)}\rangle|$ in the $t$-th iteration.
Theorem \ref{the1} presents a condition to ensure the exact recovery
of all $k$-sparse signals via the $\mathrm{OMP}_{T_0}$ algorithm in $k-g$
iterations. And we show that our condition is sharp in Theorem
\ref{the2}.
\begin{thm}\label{the1}
Let $\bm{x}\in \R^{n}$ be a $k$-sparse signal in $\bm{y}=\bm{Ax}$, $T$ be the support of $\bm{x}$
with $|T|=k$ and  $T_0$ be a prior  support of $\bm{x}$ satisfying $0\leq|T\cap T_0|=g<k$ and $|T^c\cap T_0|=b$.
Suppose  the sensing matrix $\bm{A}$ satisfies the RIP of order $k+b+1$ with
\begin{eqnarray*}
\delta_{k+b+1}<\frac{1}{\sqrt{k-g+1}}.
\end{eqnarray*}
Then the $\mathrm{OMP}_{T_0}$ algorithm exactly recovers the signal $\bm{x}$ in $k-g$ iterations.
\end{thm}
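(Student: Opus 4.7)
The natural strategy is to induct on the iteration count $t$ and show that the invariant $\Lambda_t \subseteq T \cup T_0$ holds for every $0 \le t \le k-g$; in other words, every index added by the $\mathrm{OMP}_{T_0}$ algorithm beyond $T_0$ lands in $T \setminus T_0$. Once this invariant is established, after $k-g$ iterations $\Lambda_{k-g}$ must contain the whole support $T$ while having cardinality $g+b+(k-g) = k+b$; the RIP of order $k+b+1$ then forces $\bm{A}_{\Lambda_{k-g}}$ to have full column rank, and the least-squares estimation step returns $\bm{x}$ exactly.

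The base case $\Lambda_0 = T_0 \subseteq T \cup T_0$ is immediate. For the inductive step, suppose $\Lambda_t \subseteq T \cup T_0$ with $t < k-g$. Then $|\Lambda_t \cap T| = g + t < k$, so $T \setminus \Lambda_t \neq \emptyset$. In the noiseless case, the residual identity derived in Section \ref{2} simplifies to $\bm{r}^{(t)} = \bm{A}_{T\cup \Lambda_t}\bm{z}_{T\cup \Lambda_t}$, and the block form of $\bm{z}_{T\cup \Lambda_t}$ shows that its first block $\bm{x}_{T\setminus \Lambda_t}$ is nonzero, whence $\|\tilde{\bm{z}}_{T\cup \Lambda_t}\|_2 > 0$. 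Because $T_0 \subseteq \Lambda_t$ and $\bm{r}^{(t)}$ is orthogonal to the columns of $\bm{A}_{\Lambda_t}$, the identification step maximizes $|\langle \bm{A}\bm{e}_i, \bm{r}^{(t)}\rangle|$ over $[n] \setminus \Lambda_t = (T \setminus \Lambda_t) \cup (T \cup T_0)^c$.

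At this point Lemma \ref{lemma1} delivers the central inequality: the gap between the maximum inner product over $T\setminus \Lambda_t$ and the maximum over $(T\cup T_0)^c$ is at least
\begin{equation*}
\frac{1 - \sqrt{k-g-t+1}\,\delta_{k+b+1}}{\sqrt{k-g-t}}\,\|\tilde{\bm{z}}_{T\cup \Lambda_t}\|_2.
\end{equation*}
Because $0 \le t < k-g$ implies $k-g-t+1 \le k-g+1$, the hypothesis $\delta_{k+b+1} < 1/\sqrt{k-g+1}$ makes the numerator strictly positive. Combined with $\|\tilde{\bm{z}}_{T\cup \Lambda_t}\|_2 > 0$, this forces $j_{t+1} \in T \setminus \Lambda_t$, so $\Lambda_{t+1} \subseteq T \cup T_0$ and the induction closes.

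The genuine difficulty has already been encapsulated in Lemma \ref{lemma1}, so at the theorem level the argument reduces to a clean induction plus a positivity check that is tightest at $t=0$, where $\sqrt{k-g-t+1}$ attains its maximum value $\sqrt{k-g+1}$ and thereby pins down precisely the stated threshold. The only point that needs some care is preserving the invariant $\Lambda_t \subseteq T \cup T_0$ so that the identification contest truly reduces to $T \setminus \Lambda_t$ versus $(T\cup T_0)^c$; this works because $T_0 \subseteq \Lambda_0$ and each newly added index is, by the invariant just established, also in $T \cup T_0$.
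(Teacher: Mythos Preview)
Your proposal is correct and follows essentially the same route as the paper: induction on $t$ maintaining $\Lambda_t\subseteq T\cup T_0$, the key gap estimate supplied by Lemma~\ref{lemma1}, positivity of the gap from $\delta_{k+b+1}<1/\sqrt{k-g+1}$ (tightest at $t=0$), and the final exact recovery via full column rank of $\bm{A}_{\Lambda_{k-g}}$. The only additional clarity you add is making explicit why the identification step reduces to a contest between $T\setminus\Lambda_t$ and $(T\cup T_0)^c$, which the paper leaves implicit.
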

\begin{proof}
We first prove that under the condition $\delta_{k+b+1}<\frac{1}{\sqrt{k-g+1}}$, the $\mathrm{OMP}_{T_0}$ algorithm
 succeeds in the sense of Definition \ref{definition2} by the inductive method.
For the first iteration, $\Lambda_0=T_0$ and $\bm{r}^{(0)}=\bm{A}_{T\cup T_0}\bm{z}_{T\cup T_0}$. By Lemma \ref{lemma1} with $t=0$ and $\delta_{k+b+1}<\frac{1}{\sqrt{k-g+1}}$,  we have that
\begin{align*}
&\max_{i\in T\setminus T_0}|\langle \bm{Ae}_i,\bm{A}_{T\cup T_0}\bm{z}_{T\cup T_0}\rangle|
-\max_{i\in (T\cup T_0)^c}|\langle \bm{Ae}_i,\bm{A}_{T\cup T_0}\bm{z}_{T\cup T_0}\rangle|\\
&\geq\frac{1}{\sqrt{k-g}}\bigg(1-\sqrt{k-g+1}\delta_{k+b+1}\bigg)\|\tilde{\bm{z}}_{T\cup T_0}\|_2>0
\end{align*}
which means that $\max\limits_{i\in T\setminus T_0}|\langle \bm{Ae}_i, \bm{r}^{(0)}\rangle|
>\max\limits_{i\in (T\cup T_0)^c}|\langle \bm{Ae}_i,\bm{r}^{(0)}\rangle|$.
Then the $\mathrm{OMP}_{T_0}$ algorithm  selects a correct index $j_1\in T\setminus T_0$ in the first iteration.
Suppose that the $\mathrm{OMP}_{T_0}$ algorithm has performed $t$ ($1\leq t <k-g$) iterations successfully,
that is, $\Lambda_{t}\setminus T_0 \subseteq  T\setminus T_0$.
For the $(t+1)$-th iteration, from the equality \eqref{equation3} with $\bm{v}=\bm{0}$, Lemma \ref{lemma1} and $\delta_{k+b+1}<\frac{1}{\sqrt{k-g+1}}$  it follows that
\begin{align*}
&\max_{i\in T\setminus \Lambda_t}|\langle \bm{Ae}_i,\bm{r}^{(t)}\rangle|
-\max_{i\in (T\cup \Lambda_t)^c}|\langle \bm{Ae}_i,\bm{r}^{(t)}\rangle|\\
&=\max_{i\in T\setminus \Lambda_t}|\langle \bm{Ae}_i,\bm{A}_{T\cup \Lambda_t}\bm{z}_{T\cup \Lambda_t}\rangle|
-\max_{i\in (T\cup \Lambda_t)^c}|\langle \bm{Ae}_i,\bm{A}_{T\cup \Lambda_t}\bm{z}_{T\cup \Lambda_t}\rangle|\\
&\geq\frac{1}{\sqrt{k-g-t}}\bigg(1-\sqrt{k-g-t+1}\delta_{k+b+1}\bigg)\\
&\geq\frac{1}{\sqrt{k-g}}\bigg(1-\sqrt{k-g+1}\delta_{k+b+1}\bigg)\\
&>0,
\end{align*}
which implies that the $\mathrm{OMP}_{T_0}$ algorithm make a success in the $(t+1)$-th iteration, i.e.,
 $j_{t+1}\in T\setminus\Lambda_t\subseteq T\setminus T_0$.
Therefore, if $\delta_{k+b+1}<\frac{1}{\sqrt{k-g+1}}$ then the $\mathrm{OMP}_{T_0}$ algorithm succeeds by the Definition \ref{definition2}.

It remains  to prove $\bm{x}=\hat{\bm{x}}$, where $\bm{\hat{x}}$ is the estimated signal of $\bm{x}$  in Table 1.
As the $\mathrm{OMP}_{T_0}$ algorithm has performed $k-g$ iterations successfully, we have that
$\Lambda_{k-g}=T\cup T_0$ and
 \begin{align*}
 \hat{\bm{x}}_{\Lambda_{k-g}}&=\bm{x}^{(k-g)}=\bm{A}^{\dagger}_{\Lambda_{k-g}}\bm{y}\\
 &\stackrel{(1)}{=}(\bm{A}^{'}_{\Lambda_{k-g}}\bm{A}_{\Lambda_{k-g}})^{-1}\bm{A}^{'}_{\Lambda_{k-g}}\bm{A}_T\bm{x}_T\\
 &=(\bm{A}^{'}_{\Lambda_{k-g}}\bm{A}_{\Lambda_{k-g}})^{-1}\bm{A}^{'}_{\Lambda_{k-g}}\bm{A}_{\Lambda_{k-g}}\bm{x}_{\Lambda_{k-g}}
 -(\bm{A}^{'}_{\Lambda_{k-g}}\bm{A}_{\Lambda_{k-g}})^{-1}\bm{A}^{'}_{\Lambda_{k-g}}\bm{A}_{\Lambda_{k-g}\setminus T}\bm{x}_{\Lambda_{k-g}\setminus T}\\
 &\stackrel{(2)}{=}\bm{x}_{\Lambda_{k-g}}
 \end{align*}
 where $(1)$ and $(2)$ respectively follows from the facts that the matrix $\bm{A}$ satisfies the RIP of order $k+b+1$,
which means $\bm{A}_{\Lambda_{k-g}}$ is  full column rank,
 and  $\bm{x}_{\Lambda_{k-g}\setminus T}=\bm{0}$.
 We have completed the proof of the theorem.
\end{proof}

\textbf{Remark}\ \textbf{1.} For any integers $b$ and $g$, the
condition $\delta_{k+b+1}<\frac{1}{\sqrt{k-g+1}}$ is weaker than the
sufficient condition $\delta_{k+b+1}<\frac{1}{\sqrt{k-g}+1}$ in
\cite{KE}.

Next, we show that the condition  $\delta_{k+b+1}<\frac{1}{\sqrt{k-g+1}}$ is optimal in the following theorem.
\begin{thm}\label{the2}
Let $k$ be any given  positive integer, $0\leq g<k$ and $b$ be any
given nonnegative integer. There exist a $k$-sparse signal
$\bar{\bm{x}}$ with $|T|=|\mathrm{supp}(\bar{\bm{x}})|=k$,
 a prior support $T_0$ fulfilling $|T\cap T_0|=g$ and $|T^c\cap T_0|=b$ and a matrix $\bm{A}$ satisfying
\begin{eqnarray*}
\delta_{k+b+1}=\frac{1}{\sqrt{k-g+1}}
\end{eqnarray*}
such that the $\mathrm{OMP}_{T_0}$  algorithm may fail.
\end{thm}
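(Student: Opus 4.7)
My plan is to construct an explicit counterexample by specifying the Gram matrix of $\bm{A}$ on the $k+b+1$ relevant columns. Relabel indices so that $T\setminus T_0=\{1,\ldots,k-g\}$, $T\cap T_0=\{k-g+1,\ldots,k\}$, $T^c\cap T_0=\{k+1,\ldots,k+b\}$, and reserve a decoy index $i^\star=k+b+1\notin T\cup T_0$. I would arrange the Gram matrix $\bm{A}^{'}\bm{A}$ restricted to $T\cup T_0\cup\{i^\star\}$ to be block diagonal: the $g+b$ columns indexed by $T_0$ are chosen orthonormal and orthogonal to every other column. Under this arrangement $\bm{P}_{T_0}^\bot$ annihilates the contribution of $\bar{\bm{x}}$ on $T\cap T_0$ and leaves $\bm{r}^{(0)}=\bm{A}_{T\setminus T_0}\bar{\bm{x}}_{T\setminus T_0}$, so the first iteration of $\mathrm{OMP}_{T_0}$ depends only on the $(k-g+1)$-column sub-matrix indexed by $(T\setminus T_0)\cup\{i^\star\}$. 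The block-diagonal Gram structure also implies that $\delta_{k+b+1}(\bm{A})$ equals the RIC of order $k-g+1$ of this sub-matrix, since the orthonormal block contributes only eigenvalues equal to $1$. It therefore suffices to produce a $(k-g+1)\times(k-g+1)$ positive-semidefinite Gram matrix with unit diagonal on which standard OMP may fail on a $(k-g)$-sparse signal while the RIC equals exactly $1/\sqrt{k-g+1}$.

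Writing $K=k-g$, I would then engineer the $(K+1)\times(K+1)$ Gram matrix $\bm{G}$ by a two-parameter symmetric ansatz: $\bm{G}_{ii}=1$, $\bm{G}_{ij}=\alpha$ for $i\neq j$ with $i,j\leq K$, and $\bm{G}_{i,K+1}=\beta$ for $i\leq K$, together with the choice $\bar{\bm{x}}_{T\setminus T_0}=\bm{1}_K$ (the values of $\bar{\bm{x}}$ on $T\cap T_0$ are arbitrary nonzero reals, since they play no role in the selection). Diagonalizing $\bm{G}$ splits its spectrum into the eigenvalue $1-\alpha$ of multiplicity $K-1$ on the orthogonal complement of $\bm{1}_K$ and the two eigenvalues of the reduced $2\times 2$ block on $\mathrm{span}(\bm{1}_K/\sqrt{K},\bm{e}_{K+1})$ with trace $2+(K-1)\alpha$ and determinant $1+(K-1)\alpha-K\beta^2$. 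I would then impose two equations on $(\alpha,\beta)$: (i) the spectral identity requiring that the extreme eigenvalue of the $2\times 2$ block lies on the boundary $1\pm 1/\sqrt{K+1}$, together with the easily verified constraint $|\alpha|\leq 1/\sqrt{K+1}$ controlling the orthogonal eigenvalue; and (ii) the OMP tie condition $|K\beta|=|1+(K-1)\alpha|$, which makes $|\langle\bm{a}_{K+1},\bm{r}^{(0)}\rangle|=\max_{i\leq K}|\langle\bm{a}_i,\bm{r}^{(0)}\rangle|$ and thus permits $\mathrm{OMP}_{T_0}$ to pick $i^\star$ under a worst-case tie-breaking rule. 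Solving this $2\times 2$ nonlinear system for $(\alpha,\beta)$, checking that $\bm{G}\succeq 0$, and producing $\bm{A}'$ via Cholesky factorization delivers the needed sub-matrix.

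The main obstacle is the joint resolution of conditions (i) and (ii). A direct substitution of the tie condition into the eigenvalue identity reduces the problem to a quadratic equation in $u=(K-1)\alpha$ of the form $(K+1)u^2+\bigl[2(K+1)\pm K\sqrt{K+1}\bigr]u+1=0$, whose discriminant is $K(K+1)\bigl[K\mp 4\sqrt{K+1}+4\bigr]$; the sign of the bracket, and the resulting admissibility of the real roots (including checking that both eigenvalues of the $2\times 2$ block and the eigenvalue $1-\alpha$ fall inside the required interval), is the delicate algebraic step. For ranges of $K$ where the symmetric ansatz fails to yield an admissible pair, I would refine the construction by breaking symmetry (e.g.\ giving one column of the first $K$ a distinguished inner product) or by varying the signal magnitudes $\bar{\bm{x}}_{T\setminus T_0}$, trading one free parameter for another until the system becomes feasible. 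Once admissible parameters are obtained, adjoining $g+b$ orthonormal columns as in the first paragraph produces the full matrix $\bm{A}$ with $\delta_{k+b+1}(\bm{A})=1/\sqrt{k-g+1}$ and a $k$-sparse signal $\bar{\bm{x}}$ with $|T\cap T_0|=g$ and $|T^c\cap T_0|=b$ on which $\mathrm{OMP}_{T_0}$ selects the decoy $i^\star\notin T\cup T_0$ at the first iteration, in violation of Definition \ref{definition2}.
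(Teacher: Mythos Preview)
Your block-diagonal reduction to a $(K+1)$-column sub-problem with $K=k-g$ is correct and is exactly what the paper does. The gap lies in the sub-problem: with a \emph{unit-diagonal} Gram matrix, conditions (i) and (ii) are jointly infeasible for every $K\geq 1$, so your ansatz never produces an example. For $K=1$ the tie forces $|\beta|=1$ and hence RIC $=1\neq 1/\sqrt{2}$. For $K\geq 2$, your quadratic $(K+1)u^{2}+\bigl[2(K+1)\pm K\sqrt{K+1}\bigr]u+1=0$ has product of roots $1/(K+1)>0$. With the upper sign (corresponding to $\lambda_{\max}=1+\delta$) the sum of roots is negative, so both roots are negative, yet the companion constraint $\lambda_{\min}\geq 1-\delta$ forces $u\geq 0$. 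With the lower sign (corresponding to $\lambda_{\min}=1-\delta$) the discriminant factors as $K(K+1)(\sqrt{K+1}-1)(\sqrt{K+1}-3)$, which is negative for $2\leq K\leq 7$; for $K\geq 8$ the sum of roots is positive so both roots are positive, yet the companion constraint $\lambda_{\max}\leq 1+\delta$ forces $u\leq 0$. Thus no admissible $(\alpha,\beta)$ exists, and the fallbacks you sketch (breaking symmetry, varying $\bar{\bm{x}}$) do not touch the real obstruction, which is the unit-norm constraint on the columns.

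The paper's construction simply drops that constraint. It takes $\alpha=0$, diagonal entries $G_{ii}=\tfrac{K}{K+1}$ for $i\leq K$ and $G_{K+1,K+1}=\tfrac{K+2}{K+1}$, and cross terms $G_{i,K+1}=\tfrac{1}{K+1}$; the orthonormal $T_0$-block is adjoined as you describe. The eigenvalues of this $(K+1)\times(K+1)$ Gram block are $\tfrac{K}{K+1}$ (multiplicity $K-1$) and $1\pm\tfrac{1}{\sqrt{K+1}}$, giving $\delta_{k+b+1}=\tfrac{1}{\sqrt{K+1}}$ exactly; and with $\bar{\bm{x}}_{T\setminus T_0}=\bm{1}_{K}$ one gets $\langle\bm{A}\bm{e}_{i},\bm{r}^{(0)}\rangle=\tfrac{K}{K+1}$ for every $i\in(T\setminus T_0)\cup\{i^{\star}\}$, producing the desired tie. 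Allowing the column norms to move within the RIP window is precisely the extra degree of freedom that closes the $1/\sqrt{K}$ versus $1/\sqrt{K+1}$ gap your unit-diagonal ansatz cannot bridge.
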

\begin{proof}
For given integers $k>0$, $b\geq0$ and $0 \leq g<k$, let $\bm{A}\in\R^{(k+b+1)\times (k+b+1)}$ be
\begin{eqnarray}\label{matrix1}
\bm{A}=
\begin{pmatrix}
  \ & \  & \  & 0&\cdots&0 &\frac{1}{\sqrt{(k-g+1)(k-g)}}  \\
  \  & \sqrt{\frac{k-g}{k-g+1}}\bm{I}_{k-g} & \  &\vdots&\ &\vdots  & \vdots\\
  \  & \ &\ &0 &\cdots&0& \frac{1}{\sqrt{(k-g+1)(k-g)}}\\
  0 & \cdots& 0& \ & \ &\  & \\
  \vdots  & \ & \vdots& \ & \ &\bm{I}_{g+b+1}&\  \\
   0  &\cdots& 0& \ & \ &\ &\  \\
\end{pmatrix},
\end{eqnarray}
where $\bm{I}_{k-g}$ and $\bm{I}_{g+b+1}$ are  unitary matrices.
Then
 \begin{align*}
 \bm{A}^{'}\bm{A}=
 \begin{pmatrix}
  \ & \  & \  & 0&\cdots&0 &\frac{1}{k-g+1}  \\
  \  & \frac{k-g}{k-g+1}\bm{I}_{k-g} & \  &\vdots&\ &\vdots  & \vdots\\
  \  & \ &\ &0 &\cdots&0& \frac{1}{k-g+1}\\
  0 & \cdots& 0& \ & \ &\  & 0\\
  \vdots  & \ & \vdots& \ & \bm{I}_{g+b} &\ &\vdots \\
   0  &\cdots& 0& \ & \ &\ & 0 \\
    \frac{1}{k-g+1}  &\cdots& \frac{1}{k-g+1} & 0& \cdots&0&1+\frac{1}{k-g+1} \\
\end{pmatrix}.
 \end{align*}
By elementary transformation of determinant, one can verify that
\begin{align*}
&\begin{vmatrix}
  \bm{A}^{'}\bm{A}-\lambda \bm{I}_{k+b+1} \\
\end{vmatrix}\\
&=\begin{vmatrix}
     \ & \  & \  & 0&\cdots&0 &\frac{1}{k-g+1}  \\
  \  & (\frac{k-g}{k-g+1}-\lambda)\bm{I}_{k-g} & \  &\vdots&\ &\vdots  & 0\\
  \  & \ &\ &0 &\cdots&0& \vdots\\
  0 & \cdots& 0& \ & \ &\  & \vdots\\
  \vdots  & \ & \vdots& \ & (1-\lambda)\bm{I}_{g+b} &\ &\vdots \\
   0  &\cdots& 0& \ & \ &\ & 0 \\
    \frac{k-g}{k-g+1}  &\cdots& \frac{1}{k-g+1} & 0& \cdots&0&1+\frac{1}{k-g+1}-\lambda \\
  \end{vmatrix}\\
&=(-1)^{1+1}\bigg(\frac{k-g}{k-g+1}-\lambda \bigg)\bigg(\frac{k-g}{k-g+1}-\lambda\bigg)^{k-g-1}(1-\lambda)^{g+b}\bigg(1+\frac{1}{k-g+1}-\lambda\bigg)\\
&\ \ +(-1)^{k+b+1+1}\frac{k-g}{k-g+1}(-1)^{k+b+1}\frac{1}{k-g+1}\bigg(\frac{k-g}{k-g+1}-\lambda\bigg)^{k-g-1}(1-\lambda)^{g+b}\\
&=(1-\lambda)^{g+b}\bigg(\frac{k-g}{k-g+1}-\lambda \bigg)^{k-g-1}\bigg(\lambda^2-2\lambda+\frac{k-g}{k-g+1}\bigg).
\end{align*}
Then  the eigenvalues $\{\lambda_i\}_{i=1}^{k+b+1}$ of $\bm{A}^{'}\bm{A}$  are
\begin{align*}
&\lambda_1=\cdots=\lambda_{k-g-1}=\frac{k-g}{k-g+1},\ \ \lambda_{k-g}=\cdots=\lambda_{k+b-1}=1,\\
 & \lambda_{k+b}=1-\frac{1}{\sqrt{k-g+1}},\ \ \lambda_{k+b+1}=1+\frac{1}{\sqrt{k-g+1}}.
\end{align*}
Moreover, by definition of the RIP and Remark $1$ in \cite{DM}, the matrix $\bm{A}$ in \eqref{matrix1} satisfies the RIP with
\begin{align*}
\delta_{k+b+1}&=\max\{1-\lambda_{\min}(\bm{A}^{'}\bm{A}),\ \ \lambda_{\max}(\bm{A}^{'}\bm{A})-1\}\\
&=\max\{1-\lambda_{k+b},\ \ \lambda_{k+b+1}-1\}=\frac{1}{\sqrt{k-g+1}}.
\end{align*}

Consider $k$-sparse signal $\bar{\bm{x}}=(\underbrace{1,\cdots,1,}_{k}0,\cdots,0)'\in\R^{k+b+1}$
 and the prior support $T_0=\{k-g+1,\cdots,k,k+1,\cdots,k+b\}$.
For the first iteration,
\begin{align*}
 \bm{r}^{(0)}&=\bm{A}_{T\setminus T_0}\bar{\bm{x}}_{T\setminus T_0}-\bm{A}_{T_0}(\bm{A}_{T_0}^{'}
\bm{A}_{T_0})^{-1}\bm{A}_{T_0}^{'}\bm{A}_{T\setminus T_0}\bar{\bm{x}}_{T\setminus T_0}\\
 &=(\underbrace{\sqrt{\frac{k-g}{k-g+1}},\cdots,\sqrt{\frac{k-g}{k-g+1}}}_{k-g},0,\cdots,0)^{'}\in\R^{k+b+1}.
 \end{align*}
 In fact,
 \begin{align*}
\bm{A}_{T\setminus T_0}\bar{\bm{x}}_{T\setminus T_0}=(\underbrace{\sqrt{\frac{k-g}{k-g+1}},\cdots,\sqrt{\frac{k-g}{k-g+1}}}_{k-g},0,\cdots,0)^{'}\in\R^{k+b+1},
 \end{align*}
and $\bm{A}_{T_0}^{'}\bm{A}_{T\setminus T_0}\bar{\bm{x}}_{T\setminus T_0}=\bm{0}\in\R^{g+b}$.

 For $i\in T\setminus T_0$, we have
 \begin{eqnarray*}
 |\langle \bm{Ae}_i, \bm{r}^{(0)}\rangle|=\frac{k-g}{k-g+1}.
 \end{eqnarray*}
For $i\in (T\cup T_0)^c=\{k+b+1\}$, it follows immediately that
\begin{align*}
 |\langle \bm{Ae}_i, \bm{r}^{(0)}\rangle|=\frac{k-g}{k-g+1}.
\end{align*}
It is obvious that $\max\limits_{i\in T\setminus T_0}\langle
\bm{Ae}_i,\bm{r}_0\rangle=\max\limits_{i\in(T\cup T_0)^c}\langle
\bm{Ae}_i,\bm{r}_0\rangle$ which implies the $\mathrm{OMP}_{T_0}$
algorithm  may fail to identify one index of the subset $T\setminus
T_0$ in the first iteration. So the $\mathrm{OMP}_{T_0}$ algorithm
may fail for the given matrix $\bm{A}$, the $k$-sparse signal
$\bar{\bm{x}}$ and the prior support $T_0$.
\end{proof}
\section{ Analysis on the remainder  support $T\setminus T_0$ recovery in noisy case }\label{4}
\ \
In this section, we respectively establish sufficient conditions and a necessary condition  for the exact remainder support $T\setminus T_0$ recovery
of the $k$-sparse signal $\bm{x}$ with the prior
 support $T_0$ in the model \eqref{model1} with $\bm{v}\neq\bm{0}$
 via the $\mathrm{OMP}_{T_0}$ algorithm within $k-g$ iterations.  In such case,  since the exact reconstruction of the $k$-sparse signal $\bm{x}$ cannot
be guaranteed, we use the upper bound of $\|\bm{x}-\hat{\bm{x}}\|_2$
as a performance measure of the $\mathrm{OMP}_{T_0}$ algorithm  and
obtain the upper bound. In order to recover the  whole support $T$,
we investigate  the upper bound of $\max\limits_{i\in T_0\setminus
T}|\bm{x}_i|$ and
 the lower bound of $\min\limits_{i\in T\cap T_0}|\bm{x}_i|$.
Here, we only consider $l_2$ bounded noise, i.e., $\|\bm{v}\|_2\leq\varepsilon$.

\subsection{Sufficient conditions  for the remainder support $T\setminus T_0$ recovery}

In Theorem \ref{the5},
our conditions are in terms of the RIP of order $k+b+1$ and the minimum magnitude of the  entries of   $\bm{x}_{T\setminus T_0}$.
The upper bounds of $\max\limits_{i\in T_0\setminus T}|\bm{x}_i|$ and $\|\bm{x}-\hat{\bm{x}}\|_2$ and the lower bound of $\min\limits_{i\in T\cap T_0}|\bm{x}_i|$
are obtained in Theorem \ref{the6}.

\begin{thm}\label{the5}
Let $\bm{x}$ be a $k$-sparse signal in the model \eqref{model1}, $T$ be the support of the signal $\bm{x}$ with $|T|=k$ and  $T_0$ be a prior  support of the signal $\bm{x}$
such that $|T\cap T_0|=g<k$ and $|T^c\cap T_0|=b$.
Suppose $\|\bm{v}\|_2\leq\varepsilon$  and  the sensing matrix $\bm{A}$ satisfies
\begin{eqnarray}\label{equation13}
\delta_{k+b+1}<\frac{1}{\sqrt{k-g+1}}.
\end{eqnarray}
Then  the $\mathrm{OMP}_{T_0}$ algorithm  with the stopping rule $\|\bm{r}^{(t)}\|_2\leq\varepsilon$
 exactly recovers the remainder support $T\setminus T_0$ of the signal $\bm{x}$ in  $k-g$ iterations provided that
\begin{eqnarray}\label{equation14}
\min_{i\in T\setminus T_0 }|\bm{x}_i|>\max\Big\{\frac{\sqrt{2(1+\delta_{k+b+1})}\varepsilon}{1-\sqrt{k-g+1}\delta_{k+b+1}},\ \ \frac{2\varepsilon}{\sqrt{1-\delta_{k+b+1}}}\Big\}.
\end{eqnarray}
\end{thm}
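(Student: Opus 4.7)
I plan to prove the theorem by induction on the iteration index, maintaining the invariant that $T_0 \subseteq \Lambda_t$ and $\Lambda_t \setminus T_0 \subseteq T \setminus T_0$ with $|\Lambda_t \setminus T_0| = t$, and simultaneously verifying the stopping behavior: $\|\bm{r}^{(t)}\|_2 > \varepsilon$ for every $0 \leq t < k-g$ while $\|\bm{r}^{(k-g)}\|_2 \leq \varepsilon$. Under the invariant we always have $T \cup \Lambda_t = T \cup T_0$, so \eqref{equation3} gives $\bm{r}^{(t)} = \bm{A}_{T\cup T_0}\bm{z}_{T\cup \Lambda_t} + (\bm{I}-\bm{P}_{\Lambda_t})\bm{v}$. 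At each step two estimates are needed: a correct-identification inequality $\max_{i \in T \setminus \Lambda_t}|\langle \bm{Ae}_i, \bm{r}^{(t)}\rangle| > \max_{i \in (T\cup T_0)^c}|\langle \bm{Ae}_i, \bm{r}^{(t)}\rangle|$, and a residual-norm lower bound.

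For the identification step, I would split $\bm{r}^{(t)}$ into its signal and noise parts. Choosing $i^{*} \in T \setminus \Lambda_t$ as the maximizer in \eqref{e4} attaining $\alpha_1^{(t)}$ and $j^{*} \in (T \cup T_0)^c$ as the maximizer of the noisy right-hand side, the triangle inequality yields
\begin{align*}
\max_{i \in T \setminus \Lambda_t}|\langle \bm{Ae}_i,\bm{r}^{(t)}\rangle| - \max_{i \in (T\cup T_0)^c}|\langle \bm{Ae}_i,\bm{r}^{(t)}\rangle| \geq \bigl(\alpha_1^{(t)} - \beta_1^{(t)}\bigr) - \bigl(|\langle \bm{Ae}_{i^{*}},\bm{h}\rangle| + |\langle \bm{Ae}_{j^{*}},\bm{h}\rangle|\bigr),
\end{align*}
with $\bm{h} = (\bm{I}-\bm{P}_{\Lambda_t})\bm{v}$. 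Lemma \ref{lemma1} together with $\|\tilde{\bm{z}}_{T \cup \Lambda_t}\|_2 \geq \|\bm{x}_{T \setminus \Lambda_t}\|_2 \geq \sqrt{k-g-t}\min_{i \in T\setminus T_0}|\bm{x}_i|$ (valid because $T\setminus\Lambda_t \subseteq T \setminus T_0$) lower-bounds the first parenthesis. For the noise pair, the critical move is to merge the two correlations into a single RIP-of-order-$2$ estimate via Cauchy--Schwarz:
\begin{align*}
|\langle \bm{Ae}_{i^{*}},\bm{h}\rangle| + |\langle \bm{Ae}_{j^{*}},\bm{h}\rangle| \leq \sqrt{2}\,\bigl\|\bm{A}^{'}_{\{i^{*},j^{*}\}}\bm{h}\bigr\|_2 \leq \sqrt{2(1+\delta_{k+b+1})}\,\varepsilon.
\end{align*}
Since $1-\sqrt{k-g-t+1}\,\delta_{k+b+1}$ is smallest at $t=0$, the first branch of the maximum in \eqref{equation14} gives strict positivity at every iteration.

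For the residual-norm bound, the RIP yields $\|\bm{A}_{T\cup\Lambda_t}\bm{z}_{T\cup\Lambda_t}\|_2 \geq \sqrt{1-\delta_{k+b+1}}\,\|\tilde{\bm{z}}_{T\cup\Lambda_t}\|_2$, and combined with the reverse triangle inequality and the magnitude hypothesis I obtain $\|\bm{r}^{(t)}\|_2 \geq \sqrt{(1-\delta_{k+b+1})(k-g-t)}\min_{i \in T\setminus T_0}|\bm{x}_i| - \varepsilon$. The tight case here is $t = k-g-1$, which collapses to $\min_{i \in T\setminus T_0}|\bm{x}_i| > 2\varepsilon/\sqrt{1-\delta_{k+b+1}}$, i.e.\ the second branch of \eqref{equation14}. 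At $t = k-g$ the invariant forces $\Lambda_{k-g} = T \cup T_0$, and since the RIP makes $\bm{A}_{\Lambda_{k-g}}$ full column rank, the residual reduces to $(\bm{I}-\bm{P}_{\Lambda_{k-g}})\bm{v}$, whose norm is at most $\varepsilon$, so the stopping rule triggers exactly at $t=k-g$ and the remainder support $T \setminus T_0$ is recovered as $\Lambda_{k-g} \setminus T_0$.

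The principal technical obstacle is securing the sharper constant $\sqrt{2(1+\delta_{k+b+1})}$ rather than the cruder $2\sqrt{1+\delta_{k+b+1}}$ that would arise from bounding each noise inner product separately; this is precisely what the joint RIP-of-order-$2$ bound on $\{i^{*},j^{*}\}$ buys, and matching it against the minimum-magnitude hypothesis gives the first branch of \eqref{equation14}. Secondary bookkeeping consists of checking that every RIP order invoked (in particular $|T \cup \Lambda_t \cup \{j^{*}\}| \leq k+b+1$ for Lemma \ref{lemma1} and $|\{i^{*},j^{*}\}| = 2 \leq k+b+1$ for the noise step) stays within the assumed order, and confirming that the two worst-case indices $t=0$ and $t=k-g-1$ align with the two arguments of the maximum in \eqref{equation14}.
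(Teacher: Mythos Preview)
Your proposal is correct and follows essentially the same route as the paper's proof: an induction on $t$ that (i) splits $\bm{r}^{(t)}$ into its signal and noise parts, applies Lemma \ref{lemma1} together with $\|\tilde{\bm z}_{T\cup\Lambda_t}\|_2\geq\sqrt{k-g-t}\min_{i\in T\setminus T_0}|\bm x_i|$ for the signal gap, and controls the two noise correlations jointly via the RIP-of-order-$2$ bound to obtain the factor $\sqrt{2(1+\delta_{k+b+1})}$; and (ii) uses the RIP lower bound on $\|\bm A_{T\cup\Lambda_t}\bm z_{T\cup\Lambda_t}\|_2$ to show $\|\bm r^{(t)}\|_2>\varepsilon$ for $t<k-g$ and $\|\bm r^{(k-g)}\|_2\leq\varepsilon$. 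The only cosmetic difference is that the paper takes the noise maximizers over $T\setminus\Lambda_t$ and $(T\cup T_0)^c$ separately (your $i^{*}$ maximizes the signal correlation instead), but either choice feeds into the same $\sqrt{2}\,\|\bm A'_{\{\cdot,\cdot\}}\bm h\|_2$ estimate, so the arguments are equivalent.
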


\begin{proof} The proof consists of two parts. In the first part
we show that the $\mathrm{OMP}_{T_0}$ algorithm selects  indices of the remainder support $T\setminus T_0$ in each  iteration
 under conditions \eqref{equation13} and \eqref{equation14}.
In the second part we prove that the $\mathrm{OMP}_{T_0}$ algorithm exactly performs $|T\setminus T_0|=k-g$ iterations
with the stopping rule $\|\bm{r}^{(t)}\|_2\leq\varepsilon$.

Part $I$:
By mathematical induction method,
suppose  first that the $\mathrm{OMP}_{T_0}$ algorithm  performed $t$ ($1 \leq t < k-g$) iterations successfully, that is,
$\Lambda_t\subseteq T\cup T_0$ and $j_1,\cdots, j_t \in T\setminus T_0$.
Then by the $\mathrm{OMP}_{T_0}$ algorithm in Table 1,  we need to show $j_{t+1} \in T\setminus \Lambda_t$ which
means the $\mathrm{OMP}_{T_0}$ algorithm makes a success in the $(t+1)$-th iteration. By the fact that
$\bm{r}^{(t)}$ is orthogonal to each column of $\bm{A}_{\Lambda_t}$,
we only need to prove that
\begin{align} \label{equation4}
\max_{i\in T\setminus\Lambda_t}|\langle \bm{Ae}_i, \bm{r}^{(t)}\rangle|>\max_{i\in (T\cup T_0)^c}|\langle \bm{Ae}_i, \bm{r}^{(t)}\rangle|
\end{align}
for the $(t+1)$-th iteration.

From  \eqref{equation3}, one has that
\begin{align}\label{equation5}
\max_{i\in T\setminus\Lambda_t}|\langle \bm{Ae}_i, \bm{r}^{(t)}\rangle|
&=\max_{i\in T\setminus\Lambda_t}|\langle \bm{Ae}_i, \bm{A}\tilde{\bm{z}}_{T\cup\Lambda_t}+\bm{P}_{\Lambda_t}^\perp \bm{v}\rangle|\nonumber\\
&\geq\max_{i\in T\setminus\Lambda_t}|\langle \bm{Ae}_i, \bm{A}\tilde{\bm{z}}_{T\cup\Lambda_t}\rangle|-\max_{i\in T\setminus\Lambda_t}|\langle \bm{Ae}_i, \bm{P}_{\Lambda_t}^\perp \bm{v}\rangle|
\end{align}
and
\begin{align}\label{equation6}
\max_{i\in (T\cup T_0)^c}|\langle \bm{Ae}_i, \bm{r}^{(t)}\rangle|
&=\max_{i\in (T\cup T_0)^c}|\langle \bm{Ae}_i, \bm{A}\tilde{\bm{z}}_{T\cup\Lambda_t}+\bm{P}_{\Lambda_t}^\perp \bm{v}\rangle|\nonumber\\
&\leq\max_{i\in (T\cup T_0)^c}|\langle \bm{Ae}_i, \bm{A}\tilde{\bm{z}}_{T\cup\Lambda_t}\rangle|+\max_{i\in (T\cup T_0)^c}|\langle \bm{Ae}_i, \bm{P}_{\Lambda_t}^\perp \bm{v}\rangle|.
\end{align}
Therefore, by \eqref{equation5} and \eqref{equation6}, it suffices to prove
that
\begin{align}\label{equation7}
\max_{i\in T\setminus\Lambda_t}|\langle \bm{Ae}_i, \bm{A}\tilde{\bm{z}}_{T\cup\Lambda_t}\rangle|-\max_{i\in (T\cup T_0)^c}|\langle \bm{Ae}_i, \bm{A}\tilde{\bm{z}}_{T\cup\Lambda_t}\rangle|\nonumber\\
>\max_{i\in T\setminus\Lambda_t}|\langle \bm{Ae}_i, \bm{P}_{\Lambda_t}^\perp \bm{v}\rangle|+\max_{i\in (T\cup T_0)^c}|\langle \bm{Ae}_i, \bm{P}_{\Lambda_t}^\perp \bm{v}\rangle|.
\end{align}

One  first gives a lower bound on the left-hand side of \eqref{equation7}. From Lemma \ref{lemma1},
the definition of $\bm{z}_{T\cup T_0}$ in \eqref{equation8} and the induction
 assumption $j_1,\cdots,j_t\in T\setminus T_0$ which implies $|T\setminus\Lambda_t|=k-g-t$, it follows that
\begin{align}\label{equation10}
&\max_{i\in T\setminus \Lambda_t}|\langle \bm{Ae}_i,\bm{A}_{T\cup \Lambda_t}\bm{z}_{T\cup \Lambda_t}\rangle|
-\max_{i\in (T\cup T_0)^c}|\langle \bm{Ae}_i,\bm{A}_{T\cup \Lambda_t}\bm{z}_{T\cup \Lambda_t}\rangle|\nonumber\\
&\geq\frac{1}{\sqrt{k-g-t}}\bigg(1-\sqrt{k-g-t+1}\delta_{k+b+1}\bigg)\|\tilde{\bm{z}}_{T\cup \Lambda_t}\|_2\nonumber\\
&\geq\frac{1}{\sqrt{k-g-t}}\bigg(1-\sqrt{k-g-t+1}\delta_{k+b+1}\bigg)\|\bm{x}_{T\setminus\Lambda_t}\|_2\nonumber\\
&\geq\frac{1}{\sqrt{k-g-t}}\bigg(1-\sqrt{k-g-t+1}\delta_{k+b+1}\bigg)\sqrt{k-g-t}\min_{i\in T\setminus\Lambda_t}|\bm{x}_i|\nonumber\\
&\geq\bigg(1-\sqrt{k-g+1}\delta_{k+b+1}\bigg)\min_{i\in T\setminus T_0}|\bm{x}_i|.
\end{align}
One  now gives an upper bound on the right-hand side of \eqref{equation7}. There exist the indices $i^{(t)}\in T\setminus \Lambda_t$ and $i_1^{(t)}\in (T\cup T_0)^c$
 satisfying
 \begin{align*}
\max_{i\in T\setminus\Lambda_t}|\langle \bm{Ae}_i, \bm{P}_{\Lambda_t}^\perp \bm{v}\rangle|=|\langle \bm{Ae}_{i^{(t)}}, \bm{P}_{\Lambda_t}^\perp \bm{v}\rangle|
\end{align*}
and
\begin{align*}
\max_{i\in (T\cup T_0)^c}|\langle \bm{Ae}_i, \bm{P}_{\Lambda_t}^\perp \bm{v}\rangle|=|\langle \bm{Ae}_{i_1^{(t)}}, \bm{P}_{\Lambda_t}^\perp \bm{v}\rangle|,
\end{align*}
respectively.
Therefore, we obtain that
\begin{align}\label{equation9}
&\max_{i\in T\setminus\Lambda_t}|\langle \bm{Ae}_i, \bm{P}_{\Lambda_t}^\perp \bm{v}\rangle|+\max_{i\in (T\cup T_0)^c}|\langle \bm{Ae}_i, \bm{P}_{\Lambda_t}^\perp \bm{v}\rangle|\nonumber\\
&=|\langle \bm{Ae}_{i^{(t)}}, \bm{P}_{\Lambda_t}^\perp \bm{v}\rangle|+|\langle \bm{Ae}_{i_1^{(t)}}, \bm{P}_{\Lambda_t}^\perp \bm{v}\rangle|\nonumber\\
&=\| \bm{A}_{\{i^{(t)},i_1^{(t)}\}}^{'} \bm{P}_{\Lambda_t}^\perp \bm{v}\|_1\nonumber\\
&\leq\sqrt{2}\| \bm{A}_{\{i^{(t)},i_1^{(t)}\}}^{'} \bm{P}_{\Lambda_t}^\perp \bm{v}\|_2\nonumber\\
&\stackrel{(1)}\leq\sqrt{2(1+\delta_{k-g+1})}\|\bm{P}_{\Lambda_t}^\perp \bm{v}\|_2\nonumber\\
&\stackrel{(2)}\leq\sqrt{2(1+\delta_{k-g+1})}\varepsilon
\end{align}
where (1) follows from $\bm{A}$ fulfilling the RIP with order
$k-g+1$ ($g<k$) and $(2)$ is because the fact
\begin{align*}
\|\bm{P}_{\Lambda_t}^\perp \bm{v}\|_2\leq\|\bm{P}_{\Lambda_t}^\perp \|_2\|\bm{v}\|_2\leq\|\bm{v}\|_2\leq\varepsilon.
\end{align*}
By \eqref{equation13} and \eqref{equation14}, there is
\begin{align*}
\bigg(1-\sqrt{k-g+1}\delta_{k+b+1}\bigg)\min_{i\in T\setminus T_0}|\bm{x}_i|>\sqrt{2(1+\delta_{k-g+1})}\varepsilon.
\end{align*}
It is obvious that \eqref{equation7} holds by the above inequality. Then the $\mathrm{OMP}_{T_0}$ algorithm  selects one  index from the subset $T\setminus \Lambda_t$
in the $(t+1)$-th iteration.
In conclusion, we have shown that the $\mathrm{OMP}_{T_0}$ algorithm selects one   index from $T\setminus T_0$ in each iteration.

 Part $II$: We prove that  the $\mathrm{OMP}_{T_0}$ algorithm  performs exactly $k-g$ iterations. It remains to show
that $\|\bm{r}^{(t)}\|_2>\varepsilon$ for $0\leq t <k-g$ and $\|\bm{r}^{(k-g)}\|_2\leq\varepsilon$.

Since the $\mathrm{OMP}_{T_0}$ algorithm  selects an index of $T\setminus T_0$
  in each iteration under the  conditions
\eqref{equation13} and \eqref{equation14}, $\Lambda_{k-g}=T\cup T_0$ which means $\bm{P}^{\perp}_{\Lambda_{k-g}}\bm{A}_T\bm{x}_T=\bm{0}$.
Moreover,
\begin{align*}
\|\bm{r}^{(k-g)}\|_2=\|\bm{P}^{\perp}_{\Lambda_{k-g}}\bm{A}_T\bm{x}_T+\bm{P}^{\perp}_{\Lambda_{k-g}}\bm{v}\|_2
=\|\bm{P}^{\perp}_{\Lambda_{k-g}}\bm{v}\|_2\leq\|\bm{v}\|_2\leq\varepsilon.
\end{align*}
For $0\leq t<k-g$, we have that $\Lambda_t\subseteq T\cup T_0$,
$(T\cup T_0)\setminus \Lambda_{t}\neq\emptyset$ and
\begin{align*}
\|\bm{r}^{(t)}\|_2&=\|\bm{A}_{T\cup \Lambda_{t}}\bm{z}_{T\cup\Lambda_t}+(\bm{I}-\bm{P}_{\Lambda_t})\bm{v}\|_2\\
&\geq\|\bm{A}_{T\cup \Lambda_{t}}\bm{z}_{T\cup\Lambda_t}\|_2-\|\bm{P}^{\perp}_{\Lambda_{t}}\bm{v}\|_2\\
&\stackrel{(1)}\geq \sqrt{1-\delta_{k+b}}\|\bm{z}_{T\cup\Lambda_t}\|_2-\varepsilon\\
&\geq \sqrt{1-\delta_{k+b+1}}\|\bm{x}_{T\setminus\Lambda_t}\|_2-\varepsilon\\
&\geq \sqrt{1-\delta_{k+b+1}}\sqrt{k-g-t}\min_{T\setminus\Lambda_t}|\bm{x}_i|-\varepsilon\\
&\geq \sqrt{1-\delta_{k+b+1}}\min_{T\setminus T_0}|\bm{x}_i|-\varepsilon\\
&\stackrel{(2)}>\varepsilon
\end{align*}
where (1) is because $\bm{A}$ satisfies the  RIP with order $k+b+1$ and $\|\bm{P}_{\Lambda_t}^\perp \bm{e}\|_2\leq\varepsilon$ and
 (2) is because of \eqref{equation14}.
We have completed the proof.
\end{proof}
\begin{thm}\label{the6}
Let $\bm{x}$ be a $k$-sparse signal in the model \eqref{model1} with $\|\bm{v}\|_2\leq \varepsilon$.
 $T$ be the support of $\bm{x}$ with $|T|=k$ and  $T_0$ be a prior  support of $\bm{x}$
such that $|T\cap T_0|=g<k$ and $|T^c\cap T_0|=b$. If $\delta_{k+b+1}<\frac{1}{\sqrt{k-g+1}}$,
\begin{eqnarray}\label{equation27}
\min_{i\in T}|\bm{x}_i|>\max\Big\{\frac{\sqrt{2(1+\delta_{k+b+1})}\varepsilon}{1-\sqrt{k-g+1}\delta_{k+b+1}},\ \ \frac{2\varepsilon}{\sqrt{1-\delta_{k+b+1}}}\Big\}.
\end{eqnarray}
and  the stopping rule $\|\bm{r}^{(t)}\|_2\leq\varepsilon$, then
\begin{align*}
\min_{i\in T\cap T_0}|\hat{\bm{x}}_i| > \frac{\varepsilon}{\sqrt{1-\delta_{k+b+1}}},\ \ \ \
\max_{i\in T_0\setminus T}|\hat{\bm{x}}_i|\leq\frac{\varepsilon}{\sqrt{1-\delta_{k+b+1}}}
\end{align*}
and
\begin{align*}
\|\bm{x}-\bm{\hat{x}}\|\leq\frac{\varepsilon}{\sqrt{1-\delta_{k+b+1}}},
\end{align*}
where $\hat{\bm{x}}$ is the estimated signal of $\bm{x}$ in Table 1.
\end{thm}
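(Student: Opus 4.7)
The plan is to reduce all three claimed bounds to a single $\ell_2$ bound $\|\hat{\bm x}-\bm x\|_2\leq \varepsilon/\sqrt{1-\delta_{k+b+1}}$, and to obtain that bound by a clean projection argument rather than a triangle inequality on $\|\bm v\|_2+\|\bm r^{(k-g)}\|_2$.

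First I would invoke Theorem \ref{the5}. The hypothesis \eqref{equation27} is strictly stronger than \eqref{equation14} (since $T\setminus T_0\subseteq T$), so the $\mathrm{OMP}_{T_0}$ algorithm selects one new index of $T\setminus T_0$ in each of its $k-g$ iterations and then terminates. In particular $\Lambda_{k-g}=T\cup T_0$, and consequently $T\subseteq \Lambda_{k-g}$. Because $|\Lambda_{k-g}|=k+b<k+b+1$, the RIP hypothesis guarantees that $\bm A_{\Lambda_{k-g}}$ has full column rank, so $\hat{\bm x}_{\Lambda_{k-g}}=\bm A_{\Lambda_{k-g}}^\dagger \bm y$ is well-defined and $\hat{\bm x}_{\Lambda_{k-g}^c}=\bm 0$.

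Next I would compare the true and estimated representations on $\Lambda_{k-g}$. Since $T\subseteq\Lambda_{k-g}$, we have $\bm y=\bm A_{\Lambda_{k-g}}\bm x_{\Lambda_{k-g}}+\bm v$. On the other hand, by construction $\bm y=\bm A_{\Lambda_{k-g}}\hat{\bm x}_{\Lambda_{k-g}}+\bm r^{(k-g)}$, where $\bm r^{(k-g)}=\bm P^\perp_{\Lambda_{k-g}}\bm y=\bm P^\perp_{\Lambda_{k-g}}\bm v$ (the signal part $\bm A_{\Lambda_{k-g}}\bm x_{\Lambda_{k-g}}$ lies in the range of $\bm A_{\Lambda_{k-g}}$). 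Subtracting the two expressions yields
\begin{equation*}
\bm A_{\Lambda_{k-g}}\bigl(\hat{\bm x}_{\Lambda_{k-g}}-\bm x_{\Lambda_{k-g}}\bigr)=\bm v-\bm r^{(k-g)}=\bm v-\bm P^\perp_{\Lambda_{k-g}}\bm v=\bm P_{\Lambda_{k-g}}\bm v.
\end{equation*}
Taking $\ell_2$-norms and using the lower RIP bound on $\bm A_{\Lambda_{k-g}}$ (noting $|\Lambda_{k-g}|=k+b\leq k+b+1$ and that $\bm P_{\Lambda_{k-g}}$ is an orthogonal projection, so $\|\bm P_{\Lambda_{k-g}}\bm v\|_2\leq\|\bm v\|_2\leq\varepsilon$) gives
\begin{equation*}
\sqrt{1-\delta_{k+b+1}}\,\|\hat{\bm x}-\bm x\|_2=\sqrt{1-\delta_{k+b+1}}\,\|\hat{\bm x}_{\Lambda_{k-g}}-\bm x_{\Lambda_{k-g}}\|_2\leq \varepsilon.
\end{equation*}
This is exactly the third claimed bound. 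The subtle point here is using $\bm r^{(k-g)}=\bm P_{\Lambda_{k-g}}^\perp\bm v$ so that $\bm v-\bm r^{(k-g)}=\bm P_{\Lambda_{k-g}}\bm v$; a naive triangle inequality would produce the looser constant $2\varepsilon$, which would miss the claimed bound by a factor of two.

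Finally, both entrywise bounds follow from the $\ell_2$ bound by a one-line argument. For $i\in T_0\setminus T$, we have $\bm x_i=0$, hence $|\hat{\bm x}_i|=|\hat{\bm x}_i-\bm x_i|\leq \|\hat{\bm x}-\bm x\|_2\leq \varepsilon/\sqrt{1-\delta_{k+b+1}}$. For $i\in T\cap T_0$, the reverse triangle inequality gives
\begin{equation*}
|\hat{\bm x}_i|\geq |\bm x_i|-|\hat{\bm x}_i-\bm x_i|\geq \min_{j\in T}|\bm x_j|-\frac{\varepsilon}{\sqrt{1-\delta_{k+b+1}}},
\end{equation*}
and the hypothesis \eqref{equation27} forces $\min_{j\in T}|\bm x_j|>2\varepsilon/\sqrt{1-\delta_{k+b+1}}$, producing the strict lower bound $|\hat{\bm x}_i|>\varepsilon/\sqrt{1-\delta_{k+b+1}}$. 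The main obstacle is really just the identification of $\bm r^{(k-g)}$ as $\bm P^\perp_{\Lambda_{k-g}}\bm v$; once that is in hand everything is a short computation.
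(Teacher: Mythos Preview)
Your proof is correct and follows essentially the same approach as the paper. The paper writes the error directly as $\bm{\omega}=\bm A_{T\cup T_0}^{\dagger}\bm v$ via the pseudoinverse formula and then observes $\bm A_{T\cup T_0}\bm{\omega}=\bm P_{T\cup T_0}\bm v$, whereas you arrive at the identical relation $\bm A_{\Lambda_{k-g}}(\hat{\bm x}_{\Lambda_{k-g}}-\bm x_{\Lambda_{k-g}})=\bm P_{\Lambda_{k-g}}\bm v$ by subtracting the two representations of $\bm y$; from that point on the RIP lower bound and the deduction of the three stated inequalities are the same.
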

\begin{proof}
It is obvious that the condition \eqref{equation14} is satisfied by \eqref{equation27}.
From  Theorem \ref{the5},  the  condition  $\delta_{k+b+1}<\frac{1}{\sqrt{k-g+1}}$ and the the lower bound \eqref{equation27} ensure
the $\mathrm{OMP}_{T_0}$ algorithm with the stopping rule $\|\bm{r}^{(t)}\|_2\leq\varepsilon$  exactly stops after  performing  $k-g$ iterations successfully, which implies $\Lambda_{k-g}=T\cup T_0$.
For the $\mathrm{OMP}_{T_0}$ algorithm in Table $1$, there exists
\begin{align*}
\bm{x}^{(k-g)}=\arg\min_{\bm{u}}\|\bm{y}-\bm{A}_{\Lambda_{k-g}}\bm{u}\|=\bm{A}^{\dag}_{T\cup T_0}\bm{y}=\bm{A}^{\dag}_{T\cup T_0}(\bm{A}_{T\cup T_0}\bm{x}_{T\cup T_0}+\bm{v})
= \bm{x}_{T\cup T_0}+\bm{\omega}
\end{align*}
where
\begin{align*}
\bm{\omega}=(\bm{A}_{T\cup T_0}^{'}\bm{A}_{T\cup T_0})^{-1}\bm{A}_{T\cup T_0}^{'}\bm{v}.
\end{align*}
Furthermore, we have that
\begin{align*}
\hat{\bm{x}}_i=\left\{
                 \begin{array}{ll}
                   \bm{x}_i+\bm{\omega}_i, & \hbox{$i\in T$,} \\
                   \bm{\omega}_i, & \hbox{$i\in T_0\setminus T$,}\\
                   0,& \hbox{$i\in (T_0\cup T)^c$,}
                 \end{array}
               \right.
\end{align*}
and
\begin{align*}
\sqrt{1-\delta_{k+b+1}}\|\bm{\omega}\|_2\leq \|\bm{A}_{T\cup T_0}\bm{\omega}\|_2=\|\bm{P}_{T\cup T_0}\bm{v}\|_2\leq \|\bm{v}\|\leq \varepsilon.
\end{align*}
Therefore,  by \eqref{equation27} and the above equalities and inequality, we obtain that
\begin{align*}
\min_{i\in T\cap T_0}|\hat{\bm{x}}_i| \geq \min_{i\in
T}(|\bm{x}_i|-|\bm{\omega}_i|)
> \frac{\varepsilon}{\sqrt{1-\delta_{k+b+1}}},
\end{align*}
\begin{align*}
\max_{i\in T_0\setminus T}|\hat{\bm{x}}_i|=\max_{i\in T_0\setminus T}|\bm{\omega}_i|
\leq \|\bm{\omega}\|_2\leq\frac{\varepsilon}{\sqrt{1-\delta_{k+b+1}}}
\end{align*}
and
\begin{align*}
\|\bm{x}-\hat{\bm{x}}\|_2&\leq \frac{1}{\sqrt{1-\delta_{T\cup T_0}}}\|\bm{A}(\bm{x}-\hat{\bm{x}})\|_2\\
&=\frac{1}{\sqrt{1-\delta_{T\cup T_0}}}\|\bm{A}_{T\cup T_0}\bm{x}_{T\cup T_0}-\bm{A}_{T\cup T_0} \bm{x}^{(k-g)}\|_2\\
&=\frac{1}{\sqrt{1-\delta_{T\cup T_0}}}\|\bm{A}_{T\cup T_0}\bm{\omega}\|_2\\
&\leq \frac{\varepsilon}{\sqrt{1-\delta_{k+b+1}}}.
\end{align*}
\end{proof}
\subsection{A necessary condition  for the remainder support $T\setminus T_0$ recovery}

In this subsection, we derive a necessary condition on the minimum
magnitude of
 the  components of $\bm{x}_{T\setminus T_0}$ for the exact recovery of the remainder support $T\setminus T_0$.

\begin{thm}\label{them1}
Let $\bm{x}$ be a $k$-sparse signal in the model \eqref{model1},
$T$ be the support of $\bm{x}$ with $|T|=k$ and  $T_0$ be a prior  support of the $\bm{x}$ such that $|T\cap T_0|=g<k$ and $|T^c\cap T_0|=b$.
Suppose $\|\bm{v}\|_2\leq\varepsilon$  and  the sensing matrix $\bm{A}$ satisfies
the RIP of order $k+b+1$ with $0\leq\delta_{k+b+1}<1$.
If  the $\mathrm{OMP}_{T_0}$ algorithm
 exactly recovers the remainder support $T\setminus T_0$ of the signal $\bm{x}$ in  $k-g$ iterations,
then
\begin{align}\label{equation15}
\min_{T\setminus T_0}|\bm{x}_i|>\frac{\sqrt{1-\delta_{k+b+1}}\varepsilon}{1-\sqrt{k-g+1}\delta_{k+b+1}}.
\end{align}
\end{thm}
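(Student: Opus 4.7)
The plan is to prove this necessary condition by constructing a worst-case example that extends the sharpness construction of Theorem \ref{the2} into the noisy setting. Concretely, I would exhibit a sensing matrix $\bm{A}$ of size $(k+b+1)\times(k+b+1)$ with $\delta_{k+b+1}=\delta$ for any prescribed $\delta\in[0,1/\sqrt{k-g+1})$, a $k$-sparse signal $\bar{\bm{x}}$, and a noise vector $\bm{v}$ with $\|\bm{v}\|_2\leq\varepsilon$, for which $\mathrm{OMP}_{T_0}$ fails at the first iteration whenever $c:=\min_{i\in T\setminus T_0}|\bar{\bm{x}}_i|\leq \sqrt{1-\delta}\,\varepsilon/(1-\sqrt{k-g+1}\delta)$. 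Taking contrapositives then yields the theorem.

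The matrix would be obtained by generalizing the construction in \eqref{matrix1}: replace the top-left diagonal entry $\sqrt{(k-g)/(k-g+1)}$ by $\alpha=\sqrt{1-\delta^2}$ and the top entries of the last column $1/\sqrt{(k-g+1)(k-g)}$ by $\beta=\delta/\sqrt{k-g}$, keeping the rest of the block skeleton intact. An eigenvalue computation parallel to the one in the proof of Theorem \ref{the2} isolates a $2\times 2$ reduced block whose eigenvalues are $1\pm\delta$, with the remaining eigenvalues equal to $1-\delta^2$ (multiplicity $k-g-1$) and $1$ (multiplicity $g+b$). Hence $\delta_{k+b+1}(\bm{A})=\delta$. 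With the supports arranged as in Theorem \ref{the2} (so $T\setminus T_0=\{1,\ldots,k-g\}$ and $(T\cup T_0)^c=\{k+b+1\}$), and with $\bar{\bm{x}}$ taken equal to $c$ on $T\setminus T_0$ and zero on $T\cap T_0$, the structure of $\bm{P}_{T_0}^\perp$ confines the noise-free residual $\bm{r}^{(0)}$ to the coordinate set $\{1,\ldots,k-g,k+b+1\}$, with associated inner products $(1-\delta^2)c$ on $T\setminus T_0$ and $\sqrt{(k-g)(1-\delta^2)}\,\delta c$ at $k+b+1$.

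For the noise, I would pick $\bm{v}$ supported on $\{1,\ldots,k-g,k+b+1\}$ so that $\bm{P}_{T_0}^\perp\bm{v}=\bm{v}$, and direct it along the measurement-space left singular vector of $\bm{A}$ associated with the smallest singular value $\sqrt{1-\delta}$, which is explicit from the $2\times 2$ block already diagonalized. The defining feature of this direction is that $\|\bm{A}'\bm{v}\|_2=\sqrt{1-\delta}\,\|\bm{v}\|_2$. After scaling to $\|\bm{v}\|_2=\varepsilon$ and choosing signs so as to suppress the inner products on $T\setminus T_0$ while inflating the one at $k+b+1$, the net adversarial contribution to the inner-product comparison works out to exactly $\sqrt{1-\delta}\,\varepsilon$. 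Equating this with the noise-free gap $(1-\sqrt{k-g+1}\delta)c$ extracted from Lemma \ref{lemma1} at $t=0$ pins down the critical value $c=\sqrt{1-\delta}\,\varepsilon/(1-\sqrt{k-g+1}\delta)$; for any strictly smaller $c$ the identification step in the first iteration is forced to prefer $k+b+1\in(T\cup T_0)^c$ over every index in $T\setminus T_0$.

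The chief obstacle will be the third step: the factor $\sqrt{1-\delta}$ in the numerator is tighter than what a generic Cauchy--Schwarz bound over the noise ball would give, and obtaining it requires placing $\bm{v}$ along the left singular vector at the smallest singular value rather than along a convenient coordinate direction. Once this alignment is correctly identified from the $2\times 2$ spectral block of Theorem \ref{the2}'s eigenvalue computation, the remaining inequalities collapse to routine algebra, and the theorem follows by taking contrapositives of the forced-failure statement.
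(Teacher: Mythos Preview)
Your overall strategy---prove the contrapositive by exhibiting a matrix $\bm{A}$ with prescribed RIC $\delta$, a signal, and a noise vector for which $\mathrm{OMP}_{T_0}$ fails already at the first iteration---is exactly the paper's approach. The difficulty is that the two algebraic identities you lean on do not hold for the matrix you propose.

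First, the ``noise-free gap $(1-\sqrt{k-g+1}\delta)c$'' is the \emph{lower bound} from Lemma~\ref{lemma1}, not the actual gap produced by your matrix. With the block you describe ($\alpha=\sqrt{1-\delta^{2}}$ on the diagonal and $\beta=\delta/\sqrt{k-g}$ in the last column), one computes
\[
\langle \bm{Ae}_i,\bm{A}\tilde{\bm z}\rangle=(1-\delta^{2})c\quad(i\in T\setminus T_0),\qquad
\langle \bm{Ae}_{k+b+1},\bm{A}\tilde{\bm z}\rangle=\delta\sqrt{(k-g)(1-\delta^{2})}\,c,
\]
so the gap is $c\sqrt{1-\delta^{2}}\bigl(\sqrt{1-\delta^{2}}-\sqrt{k-g}\,\delta\bigr)$, which equals $(1-\sqrt{k-g+1}\delta)c$ only in the degenerate case $\delta=0$. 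Second, aligning $\bm{v}$ with the left singular vector at singular value $\sqrt{1-\delta}$ gives $\|\bm{A}'\bm{v}\|_{2}=\sqrt{1-\delta}\,\varepsilon$, but the quantity that closes the gap is a \emph{specific linear combination} of the entries of $\bm{A}'\bm{v}$, not its $\ell_{2}$ norm; a direct computation shows this combination is not $\sqrt{1-\delta}\,\varepsilon$ for your matrix. Hence the critical value you derive does not coincide with $\theta=\sqrt{1-\delta}\,\varepsilon/(1-\sqrt{k-g+1}\delta)$.

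The paper sidesteps both issues by building $\bm{A}=\bm{D}\bm{U}$ with $\bm{U}$ orthogonal and $\bm{D}$ diagonal (entries $\sqrt{1\pm\delta}$), where $\bm{U}$ carries an explicit rotation in the plane $\mathrm{span}\{\tfrac{1}{\sqrt{k-g}}\bm{1}_{k-g},\bm{e}_{k+b+1}\}$ with angle parameter $\eta=\frac{\sqrt{k-g+1}-1}{\sqrt{k-g}}$. This $\eta$ is precisely the $s$ from the proof of Lemma~\ref{lemma1}, and it is what forces the lemma's inequality to be an equality, yielding the gap exactly $(1-\sqrt{k-g+1}\delta)\theta$. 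The factorization $\bm{A}=\bm{D}\bm{U}$ then lets one choose $\bm{v}=\bm{D}^{-1}\bm{U}(0,\ldots,0,-\sqrt{1-\delta}\,\varepsilon)'$, so that $\bm{A}'\bm{v}$ is supported on the single index $k+b+1$ with value $-\sqrt{1-\delta}\,\varepsilon$; this is what makes the noise contribution equal to $\sqrt{1-\delta}\,\varepsilon$ on the nose. Your direct extension of the matrix in \eqref{matrix1} does not have either feature, so the missing ingredient is this $\eta$-rotation (equivalently, an $\bm{A}=\bm{D}\bm{U}$ factorization designed to saturate Lemma~\ref{lemma1}). A minor additional point: taking $\bar{\bm{x}}$ equal to zero on $T\cap T_0$ makes the signal only $(k-g)$-sparse; you need nonzero entries there (the paper uses $1$'s) so that $|T|=k$ as required.
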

\begin{proof}
The proof  below roots in \cite{WZWTM}. However, some essential
modifications are necessary in order to adapt the results to sparse
signals $\bm{x}$ with the prior  support $T_0$. Using proofs by
contradiction, we show the theorem. We construct a linear model of
the form $\bm{y}=\bm{Ax}+\bm{v}$,
 where the sensing matrix  $\bm{A}$ and the error vector $\bm{v}$
respectively  satisfy the RIP of order $k+b+1$ with
$0\leq\delta_{k+b+1}(\bm{A})=\delta_{k+b+1}<1$ and
$\|\bm{v}\|_2\leq\varepsilon$, and $\bm{x}$ is a $k$-sparse signal
with the prior support $T_0$ and satisfies
\begin{align}\label{equation16}
\min_{T\setminus T_0}|\bm{x}_i|\leq\theta :=\frac{\sqrt{1-\delta_{k+b+1}}\varepsilon}{1-\sqrt{k-g+1}\delta_{k+b+1}},
\end{align}
such that the  $\mathrm{OMP}_{T_0}$ algorithm may fail to
 exactly recover the remainder support $T\setminus T_0$ of the  signal $\bm{x}$  within  $k-g$ iterations.

 It is well known that there exist the unit vectors  $\bm{\xi}^{(1)},\bm{\xi}^{(2)},\cdots,\bm{\xi}^{(k-g-1)}\in\R^{k-g}$ such that the matrix
 \begin{align*}
 \left(
   \begin{array}{c}
     \bm{\xi}^{(1)}\ \ \bm{\xi}^{(2)}\ \ \cdots\ \ \bm{\xi}^{(k-g-1)}\ \ \frac{1}{\sqrt{k-g}}\bm{1}_{k-g} \\
   \end{array}
 \right)\in \R^{{(k-g)}\times{(k-g)}}
\end{align*}
is orthogonal, which implies $\langle \bm{\xi}^{(i)}, \bm{\xi}^{(j)}\rangle=0$ and $\langle\bm{\xi}^{(i)}, \bm{1}_{k-g}\rangle=0$
for $i,j=1,\cdots,k-g-1$ and $i\neq j$, where $\bm{1}_{k-g}=(1,\cdots,1)^{'}\in\R^{k-g}$.
Let the matrix
\begin{align}\label{equation17}
\bm{U}^{'}=\left(
  \begin{array}{cccccccc}
    \bm{\xi}^{(1)} & \cdots & \bm{\xi}^{(k-g-1)}& \frac{\bm{1}_{k-g}}{\sqrt{(k-g)(\eta^2+1)}}  & \ &\bm{0}_{(k-g)\times(g+b)}&\ &\frac{\eta \bm{1}_{k-g}}{\sqrt{(k-g)(\eta^2+1)}} \\
    0 & \cdots &0 & 0 & \ &\ &\ & 0 \\
    \vdots & \  & \vdots  & \vdots & \ &\bm{I}_{g+b}&\ & \vdots \\
    0 & \cdots & 0 & 0 & \ &\ &\ & 0  \\
    0 & \cdots & 0 & \frac{\eta}{\sqrt{\eta^2+1}} & 0 & \cdots& 0 &-\frac{1}{\sqrt{\eta^2+1}}\\
  \end{array}
\right),
\end{align}
 where
 \begin{align*}
 \eta=\frac{\sqrt{k-g+1}-1}{\sqrt{k-g}}.
 \end{align*}
 Then
 $\bm{U}$ is also an orthogonal matrix.

 Let $\bm{D}\in\R^{(k+b+1)\times(k+b+1)}$ be a diagonal matrix with
 \begin{align}\label{equation18}
 d_{ii}=\left\{
          \begin{array}{ll}
            \sqrt{1-\delta_{k+b+1}}, & \hbox{$i=k-g$,} \\
            \sqrt{1+\delta_{k+b+1}}, & \hbox{$i\neq k-g$,}
          \end{array}
        \right.
 \end{align}
and the sensing matrix $\bm{A}=\bm{DU}$,
then $\bm{A}^{'}\bm{A}=\bm{U}^{'}\bm{D}^2\bm{U}$.
In the following, we show that $\delta_{k+b+1}(\bm{A})=\delta_{k+b+1}$.
For any $\bm{x}\in \R^{k+b+1}$, setting $\hat{\bm{\nu}}=\bm{Ux}$, we have that
\begin{align*}
\|\bm{Ax}\|_2^2&=\langle \bm{Ax}, \bm{Ax} \rangle
=\bm{x}^{'}\bm{A}^{'}\bm{Ax}=(\bm{Ux})^{'}\bm{D}^{'}\bm{D}(\bm{Ux})\\
&=\hat{\bm{\nu}}^{'}\bm{D}^2\hat{\bm{\nu}}
=(1+\delta_{k+b+1})\|\hat{\bm{\nu}}\|_2^2-2\delta_{k+b+1}\hat{\bm{\nu}}_{k-g}^2\\
&\leq (1+\delta_{k+b+1})\|\hat{\bm{\nu}}\|_2^2
\stackrel{(1)}=(1+\delta_{k+b+1})\|\bm{x}\|_2^2
\end{align*}
and
\begin{align*}
\|\bm{Ax}\|_2^2&=\langle \bm{Ax}, \bm{Ax} \rangle
=(1-\delta_{k+b+1})\|\hat{\bm{\nu}}\|_2^2+2\delta_{k+b+1}\sum_{1\leq i\leq k+b+1,\ i\neq k-g}\hat{\bm{\nu}}_i^2\\
&\geq (1-\delta_{k+b+1})\|\hat{\bm{\nu}}\|_2^2
\stackrel{(2)}=(1-\delta_{k+b+1})\|\bm{x}\|_2^2,
\end{align*}
where (1) and (2)  result of the fact that $\bm{U}$ is an orthogonal matrix.
Then, based on the definition \ref{definition1},
we have $\delta_{k+b+1}(\bm{A})\leq \delta_{k+b+1} $.
It  remains to prove that the matrix $\bm{A}=\bm{DU}$ satisfies $\delta_{k+b+1}(\bm{A})\geq\delta_{k+b+1}.$
Let the vector
\begin{align*}
\hat{\bm{x}}=((\bm{\xi}^{(1)})^{'},0,\cdots,0)^{'}\in\R^{k+b+1},
\end{align*}
then $\hat{\bm{x}}$ is $(k+b+1)$-sparse and $\|\hat{\bm{x}}\|_2^2=1$. By the definitions of $\bm{D}$ and $\bm{A}$,
we obtain that
\begin{align*}
\|\bm{A}\hat{\bm{x}}\|_2^2=(\bm{U}\hat{\bm{x}})^{'}\bm{D}^2\bm{U}\hat{\bm{x}}=\bm{e}_1^{'}\bm{D}^2\bm{e}_1=1+\delta_{k+b+1}=(1+\delta_{k+b+1})\|\hat{\bm{x}}\|_2^2.
\end{align*}
So $\delta_{k+b+1}(\bm{A})\geq\delta_{k+b+1}$.
In conclusion, $\delta_{k+b+1}(\bm{A})=\delta_{k+b+1}$.

Let the original signal
\begin{align*}
\bm{x}=(\overbrace{\theta,\cdots, \theta,}^{T\setminus T_0}\overbrace{1,\cdots,1,}^{T_0\cap T}\overbrace{0,\cdots,0}^{T_0\setminus T},0)^{'}\in\R^{k+b+1},
\end{align*}
where $\theta$ is defined in \eqref{equation16}. Then the signal $\bm{x}$
is $k$-sparse with the support $T=\{1,2,\cdots,k\}$, the prior support $T_0=\{k-g+1,\cdots,k+b\}$ and satisfies \eqref{equation16}.
It is not hard to prove that $\bm{A}_{T\setminus T_0}=\bm{DU}_{T\setminus T_0}$.
Moreover, by some simple calculations we derive that
\begin{align*}
&\bm{A}_{T\setminus T_0}\bm{x}_{T\setminus T_0}=\bm{DU}_{T\setminus T_0}\bm{x}_{T\setminus T_0}\\
&=\bm{D}\bigg(\underbrace{0,\cdots,0}_{k-g-1},\sqrt{\frac{k-g}{\eta^2+1}}\theta,\underbrace{0,\cdots, 0}_{g+b},
\sqrt{\frac{k-g}{\eta^2+1}}\eta \theta\bigg)^{'}\\
&=\sqrt{1+\delta_{k+b+1}}\bigg(\underbrace{0,\cdots, 0}_{k-g-1},\sqrt{\frac{(k-g)(1-\delta_{k+b+1})}{(\eta^2+1)(1+\delta_{k+b+1})}}\theta ,
\underbrace{0, \cdots,0}_{g+b}, \sqrt{\frac{k-g}{\eta^2+1}}\eta \theta\bigg)^{'}.
\end{align*}
and
\begin{align}\label{equation24}
\bm{A}^{'}\bm{A}_{T\setminus T_0}\bm{x}_{T\setminus T_0}=(\underbrace{\mu,\cdots,\mu,}_{k-g}\underbrace{0,\cdots,0,}_{g+b} -\frac{2\eta}{\eta^2+1}\sqrt{k-g}\delta_{k+b+1}\theta)^{'}
\end{align}
where $\mu=\frac{(1-\delta_{k+b+1})+(1+\delta_{k+b+1})\eta^2}{\eta^2+1}\theta$.
Similarly, let the error vector
\begin{align*}
\bm{v}&=\bm{D}^{-1}\bm{U}(\underbrace{0,\cdots,0}_{k+b},-\sqrt{1-\delta_{k+b+1}}\varepsilon)^{'},\\
&=\bm{D}^{-1}(\underbrace{0, \cdots,0}_{k-g-1},\frac{-\sqrt{1-\delta_{k+b+1}}\eta\varepsilon}{\sqrt{\eta^2+1}},\underbrace{0,\cdots, 0}_{g+b},\sqrt{\frac{1-\delta_{k+b+1}}{\eta^2+1}}\varepsilon)^{'}\\
&=(\underbrace{0,\cdots,0}_{k-g-1},\frac{-\eta\varepsilon}{\sqrt{\eta^2+1}}, \underbrace{0,\cdots ,0}_{g+b},\sqrt{\frac{1-\delta_{k+b+1}}{(\eta^2+1)(1+\delta_{k+b+1})}}\varepsilon)^{'}
\end{align*}
then $\|\bm{v}\|_2\leq\varepsilon$,
\begin{align}\label{equation25}
\bm{A}^{'}\bm{v}&=\bm{U}^{'}\bm{DD}^{-1}\bm{U}(\underbrace{0,\cdots,0}_{k+b}, -\sqrt{1-\delta_{k+b+1}}\varepsilon)^{'}\nonumber\\
&=(\underbrace{0,\cdots,0}_{k+b}, -\sqrt{1-\delta_{k+b+1}}\varepsilon)^{'}.
\end{align}
 By \eqref{equation24} and \eqref{equation25}, it is clear that
\begin{align*}
\bm{A}_{T_0}^{'}\bm{A}_{T\setminus T_0}\bm{x}_{T\setminus T_0}=\bm{0},\ \ \ \
\bm{A}_{T_0}^{'}\bm{v}=\bm{0}.
\end{align*}
Therefore, using \eqref{equation3} and the above equality, we obtain that
\begin{align*}
\bm{r}^{(0)}&=\bm{A}_{T\setminus T_0}\bm{x}_{T\setminus T_0}-\bm{A}_{T_0}(\bm{A}_{T_0}^{'}{\bm{A}_{T_0}})^{-1}\bm{A}_{T_0}^{'}\bm{A}_{T\setminus T_0}\bm{x}_{T\setminus T_0}+\bm{v}-\bm{A}_{T_0}(\bm{A}_{T_0}^{'}{\bm{A}_{T_0}})^{-1}\bm{A}_{T_0}^{'}\bm{v}\\
&=\bm{A}_{T\setminus T_0}\bm{x}_{T\setminus T_0}+\bm{v}.
\end{align*}
Therefore, we have that
\begin{align*}
\langle \bm{Ae}_i, \bm{r}^{(0)} \rangle&=\left\{
                            \begin{array}{ll}
                              \frac{(1-\delta_{k+b+1})+(1+\delta_{k+b+1})\eta^2}{\eta^2+1}\theta, & \hbox{$i\in T\setminus T_0$} \\
                              -\frac{2\eta}{\eta^2+1}\sqrt{k-g}\delta_{k+b+1}\theta-\sqrt{1-\delta_{k+b+1}}\varepsilon, & \hbox{$i=k+b+1$}
                            \end{array}
                          \right.\\
&=\left\{
    \begin{array}{ll}
      (1-\frac{1}{\sqrt{k-g+1}}\delta_{k+b+1})\theta, & \hbox{$i\in T\setminus T_0$} \\
      -\frac{k-g}{\sqrt{k-g+1}}\delta_{k+b+1}\theta-\sqrt{1-\delta_{k+b+1}}\varepsilon, & \hbox{$i=k+b+1$.}
    \end{array}
  \right.
\end{align*}
From \eqref{equation16}, it follows that
\begin{align*}
\max_{i\in T\setminus T_0}|\langle \bm{Ae}_i, \bm{r}^{(0)}
\rangle|=\max_{i\in(T\cup T_0)^c}|\langle \bm{A e}_i, \bm{r}^{(0)}
\rangle|,
\end{align*}
which means  the $\mathrm{OMP}_{T_0}$  algorithm may choose a wrong index $k+b+1$ in the first iteration.
That is, the remainder support $T\setminus T_0$ of the signal $\bm{x}$ may not be exactly recovered in $k-g$ iterations
 by the $\mathrm{OMP}_{T_0}$  algorithm.
We completed the proof.
\end{proof}

\section{Discussion}\label{5}
In this section, we shall focus exclusively the discussions on the
validity of our sufficient condition. In section \ref{3}, for any
$k$-sparse signals $\bm{x}$ with $|T|=|\mathrm{supp}(\bm{x})|=k$
from $\bm{y}=\bm{Ax}$ and the prior support $T_0$ satisfying $|T\cap
T_0|=g<k$ and $|T_0\setminus T|=b$, we have established the
condition based on the RIC $\delta_{k+b+1}<\frac{1}{\sqrt{k-g+1}}$
to guarantee the exact recovery  of the signal $\bm{x}$ via the
$\mathrm{OMP}_{T_0}$ algorithm in $k-g$ iterations and proved the
upper bound of RIC depending  on $g$ is sharp. It is known from
Theorem III.1 in  \cite{M} that if $\bm{A}$ satisfies the condition
$\delta_{k+1}<\frac{1}{\sqrt{k+1}}$ then the standard OMP algorithm
will recover any $k$-sparse signals $\bm{x}$ from $\bm{y}=\bm{Ax}$
in $k$ iterations. Moreover, the author \cite{M} also show that the
condition $\delta_{k+1}<\frac{1}{\sqrt{k+1}}$  is sharp.
 In order to state the validity of the sharp condition in this paper, we need to compare the
 two bounds
 \begin{align}\label{5-1}
 \delta_{k+b+1}<\frac{1}{\sqrt{k-g+1}}
 \end{align}
 and
 \begin{align}\label{5-2}
 \delta_{k+1}<\frac{1}{\sqrt{k+1}}.
 \end{align}
Since $\delta_{k+b+1}\geq \delta_{k+1}$ and
$\frac{1}{\sqrt{k-g+1}}\geq\frac{1}{\sqrt{k+1}}$, it is impossible
to compare these two sharp conditions directly. Intuitively, when
$b$ is very small and $g$ is large, we expect that the sharp
condition \eqref{5-1} to be weaker than  the condition \eqref{5-2}.
For example, taking  $b=0$ and $0<g<k$, the condition \eqref{5-1} is
weaker than the  condition \eqref{5-2}. Now, we establish exact
comparison of these two bounds of $\delta_{k+b+1}$ in \eqref{5-1}
 and  $\delta_{k+1}$ in \eqref{5-2} for some particular cases in the following theorem.
 \begin{thm}
  For any positive  integers $c\geq 3$,
 assume that $k>2c^2-1$, $(1-\frac{1}{c^2})(k+1)\leq g <k$ and $1\leq b\leq (c-2)\lceil\frac{k}{2}\rceil$,
  then the condition  $\delta_{k+b+1}<\frac{1}{\sqrt{k-g+1}}$ in this paper is weaker than the sufficient condition $\delta_{k+1}<\frac{1}{\sqrt{k+1}}$ \cite{M}.
 \end{thm}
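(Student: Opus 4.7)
The plan is to establish the logical implication \eqref{5-2}$\Rightarrow$\eqref{5-1}: every sensing matrix $\bm{A}$ satisfying Mo's condition $\delta_{k+1}<1/\sqrt{k+1}$ automatically satisfies $\delta_{k+b+1}<1/\sqrt{k-g+1}$ under the theorem's hypotheses on $c,k,g,b$. This containment of feasibility sets is precisely the sense in which the new condition \eqref{5-1} is weaker than \eqref{5-2}.

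First I would convert the lower bound on $g$ into a lower bound on the right-hand side of \eqref{5-1}. The assumption $g\geq(1-1/c^2)(k+1)$ rearranges as $k-g+1\leq(k+1)/c^2$, and hence
\[
\frac{1}{\sqrt{k-g+1}}\geq\frac{c}{\sqrt{k+1}}.
\]
So it suffices to prove the left-hand side bound $\delta_{k+b+1}<c/\sqrt{k+1}$ starting from \eqref{5-2}.

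The second and main step is a cross-order RIP comparison of the form $\delta_{k+b+1}\leq c\cdot\delta_{k+1}$. I would invoke the Needell--Tropp type inequality $\delta_s\leq\lceil s/t\rceil\,\delta_{2t}$, valid for any positive integers $s,t$, applied with $s=k+b+1$ and $t=\lceil k/2\rceil$. This choice gives $2t\in\{k,k+1\}$, so by monotonicity $\delta_{2t}\leq\delta_{k+1}$, and the task reduces to showing
\[
\biggl\lceil\frac{k+b+1}{\lceil k/2\rceil}\biggr\rceil\leq c.
\]
When $k$ is odd, this ratio equals $2+2b/(k+1)$, and the hypothesis $b\leq(c-2)(k+1)/2$ makes it at most $c$. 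When $k$ is even, the ratio equals $2+2(b+1)/k$; at the boundary $b=(c-2)k/2$ this equals $c+2/k$, which is where the hypothesis $k>2c^2-1$ enters, providing enough slack to absorb the $O(1/k)$ correction, possibly by applying the comparison at a slightly different pair $(s,t)$ and then using monotonicity to reroute back to $\delta_{k+1}$.

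Finally the chaining is routine: under \eqref{5-2},
\[
\delta_{k+b+1}\leq c\,\delta_{k+1}<\frac{c}{\sqrt{k+1}}\leq\frac{1}{\sqrt{k-g+1}},
\]
which is \eqref{5-1}. The main obstacle is clearly the second step in its even-$k$ edge case, since the naive Needell--Tropp estimate is just barely insufficient there; the hypotheses $c\geq3$, $k>2c^2-1$, and the use of $\lceil k/2\rceil$ rather than $\lfloor k/2\rfloor$ in the bound on $b$ all look tuned precisely to make that boundary analysis close.
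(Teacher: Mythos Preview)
Your plan is essentially the paper's proof. The paper derives $c/\sqrt{k+1}\leq 1/\sqrt{k-g+1}$ from the lower bound on $g$, then asserts $k+b+1\leq c\lceil k/2\rceil$ from the upper bound on $b$, applies the Needell--Tropp inequality $\delta_{cr}<c\,\delta_{2r}$ (Corollary~3.4 of \cite{NT}) with $r=\lceil k/2\rceil$, uses $2\lceil k/2\rceil\leq k+1$, and chains exactly as you wrote:
\[
\delta_{k+b+1}\leq\delta_{c\lceil k/2\rceil}<c\,\delta_{2\lceil k/2\rceil}\leq c\,\delta_{k+1}<\frac{c}{\sqrt{k+1}}\leq\frac{1}{\sqrt{k-g+1}}.
\]
Notably, the paper does \emph{not} invoke the hypothesis $k>2c^2-1$ anywhere in this chain and does not single out the even-$k$ boundary; it simply writes $k+b+1\leq c\lceil k/2\rceil$ without further comment. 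So your worry about that edge case, and your speculation that $k>2c^2-1$ is what closes it, is more scrupulous than the paper itself---in the paper that hypothesis appears to serve only to make the interval $[(1-1/c^2)(k+1),\,k)$ for $g$ genuinely nonempty.
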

 \begin{proof}
  By $g\geq (1-\frac{1}{c^2})(k+1)$, we derive that
 \begin{align}\label{4-1}
 \frac{c}{\sqrt{k+1}}\leq\frac{1}{\sqrt{k-g+1}}.
 \end{align}
Since $1 \leq b\leq (c-2)\lceil\frac{k}{2}\rceil$, we have $k+b+1\leq c\lceil\frac{k}{2}\rceil$.
Then, $\delta_{k+b+1}\leq \delta_{c\lceil\frac{k}{2}\rceil}$.
Therefore, from $\delta_{cr}<c\cdot \delta_{2r}$ for any positive integers $c$ and $r$ (seeing Corollary 3.4 in \cite{NT}),
 the fact $k+1\geq 2\lceil\frac{k}{2}\rceil$ with $k\geq 2$, $\delta_{k+1}<\frac{1}{\sqrt{k+1}}$ and the inequality \eqref{4-1},
 it follows that
 \begin{align*}
 \delta_{k+b+1}\leq \delta_{c\lceil\frac{k}{2}\rceil}<c\delta_{2\lceil\frac{k}{2}\rceil}\leq c\delta_{k+1}< \frac{c}{\sqrt{k+1}}\leq \frac{1}{\sqrt{k-g+1}},
 \end{align*}
 which implies the condition $\delta_{k+b+1}$ in this paper is weaker than the sufficient condition $\delta_{k+1}<\frac{1}{\sqrt{k+1}}$.
 We complete the proof of the theorem.
\end{proof}


\end{document}